\documentclass[12pt]{article}
\usepackage[utf8]{inputenc}
\usepackage{setspace}
\usepackage{amsmath,amsthm,amssymb,mathtools,bm}
\usepackage[margin=1in]{geometry}
\usepackage{booktabs,array,tabularx}
\newcolumntype{L}{>{\raggedright\arraybackslash}X}
\newcolumntype{C}{>{\centering\arraybackslash}X}
\usepackage{graphicx}
\usepackage{enumitem}
\usepackage{mathrsfs}
\usepackage[authoryear,round]{natbib}
\usepackage[hidelinks,hypertexnames=false]{hyperref}
\usepackage[section]{placeins}
\usepackage{float}

\allowdisplaybreaks
\newcommand{\E}{\mathbb{E}}

\newcommand{\one}{\mathbf{1}}

\newcommand{\tp}{^{\!\top}}
\newcommand{\Tt}{\mathcal{T}}
\newcommand{\Cc}{\mathcal{C}}

\newcommand{\tr}{\operatorname{tr}}
\DeclareMathOperator*{\argmin}{arg\,min}

\DeclareMathOperator{\diag}{diag}
\DeclareMathOperator{\dist}{dist}
\DeclareMathOperator{\rank}{rank}
\newcommand{\tablenote}[1]{\par\vspace{2pt}\begin{minipage}{0.98\textwidth}\footnotesize\textit{Note:} #1\end{minipage}}
\newcommand{\fignote}[1]{\par\vspace{1pt}\footnotesize\textit{Note:} #1}

\theoremstyle{plain}
\newtheorem{theorem}{Theorem}
\newtheorem{lemma}{Lemma}
\newtheorem{proposition}{Proposition}

\theoremstyle{definition}
\newtheorem{assumption}{Assumption}
\newtheorem{definition}{Definition}
\theoremstyle{remark}
\newtheorem{remark}{Remark}

\newcommand{\ActCondDonor}{$1.8$}
\newcommand{\ActCondTime}{$67.0$}
\newcommand{\ActDonorChanges}{$17$}
\newcommand{\ActFailDonor}{$0$}
\newcommand{\ActFailTime}{$0$}
\newcommand{\ActMaxKKTDonor}{$9.9\times10^{-19}$}
\newcommand{\ActMaxKKTTime}{$9.3\times10^{-9}$}
\newcommand{\ActMinInactiveDonor}{$5.2\times10^{-5}$}
\newcommand{\ActMinInactiveTime}{$8.6\times10^{-4}$}
\newcommand{\ActMinPositiveDonor}{$1.5\times10^{-3}$}
\newcommand{\ActMinPositiveTime}{$0.278$}
\newcommand{\ActNDonor}{$18$}
\newcommand{\ActNTime}{$1$}
\newcommand{\ActReplicates}{80}
\newcommand{\ActTimeChanges}{$1$}
\newcommand{\BiasAwareChi}{$2.43$}
\newcommand{\BiasAwareMeanAnchor}{$[-11.96,\,-0.72]$}

\newcommand{\BiasAwareSlopeAnchor}{$[-9.06,\,-2.81]$}

\newcommand{\CouplingBoundaryBlock}{$0.0012$}
\newcommand{\CouplingRadiusBlock}{$6.553$}
\newcommand{\CouplingRadiusCommon}{$6.544$}
\newcommand{\CouplingRtwoSetBlock}{$[0.254,\,0.996]$}
\newcommand{\CouplingVSetBlock}{$[18.5,\,97.4]$}
\newcommand{\CouplingWidthMax}{$1.021$}
\newcommand{\CouplingWidthMin}{$0.985$}

\newcommand{\DrawEntropyRank}{$4.72$}
\newcommand{\DrawPartRank}{$1.61$}
\newcommand{\DrawTopShare}{$0.62$}
\newcommand{\DrawTopThreeShare}{$0.77$}
\newcommand{\ExposureHilo}{$0.052$}
\newcommand{\ExposureMean}{$0.425$}
\newcommand{\ExposurePopMean}{$0.392$}

\newcommand{\ExposureSlope}{$0.042$}
\newcommand{\FullMetricSpread}{$20.2$}

\newcommand{\MCCovMCSE}{$0.005$}
\newcommand{\MCDraws}{$999$}
\newcommand{\MCFwerAppShaped}{$0.984$}
\newcommand{\MCFwerNearFace}{$0.944$}
\newcommand{\MCFwerPointMass}{$0.997$}
\newcommand{\MCFwerStable}{$0.950$}
\newcommand{\MCFwerZeroMargin}{$0.941$}

\newcommand{\MCRegionRegular}{$0.963$}
\newcommand{\MCReps}{$2{,}000$}

\newcommand{\MCSDRatioAppShaped}{$1.304$}
\newcommand{\MCSDRatioPointMass}{$1.611$}
\newcommand{\MCSDRatioRegular}{$1.052$ to $1.057$}

\newcommand{\MeanWidthPercent}{$30$}

\newcommand{\MedBoundaryDirPN}{0.0012}
\newcommand{\MedBreakdownChi}{$20.6$}

\newcommand{\MedDirDraws}{$800$}
\newcommand{\MedDirDrawsN}{800}
\newcommand{\MedHiloDirectionalCI}{$[-8.09,\,-6.55]$}
\newcommand{\MedHoldoutChi}{$2.43$}

\newcommand{\MedLowerVAnchored}{$19.8$}
\newcommand{\MedLowerVOne}{$31.6$}

\newcommand{\MedLowerVTwo}{$22.7$}

\newcommand{\MedLowerVexplAnchored}{$11.8$}

\newcommand{\MedMeanDirectionalCI}{$[-8.42,\,-4.26]$}
\newcommand{\MedMeanSEFull}{$0.843$}
\newcommand{\MedPopMeanDirectionalCI}{$[-7.99,\,-4.13]$}

\newcommand{\MedRtwoSet}{$[0.285,\,0.994]$}

\newcommand{\MedSlopeDirectionalCI}{$[-6.56,\,-5.30]$}
\newcommand{\MedSlopeSEFull}{$0.274$}
\newcommand{\MedStepCN}{0.10}
\newcommand{\MedStepGammaN}{0.8}

\newcommand{\MedVDirectionalSet}{$[18.9,\,92.9]$}

\newcommand{\MedVexplDirectionalSet}{$[10.5,\,74.2]$}

\newcommand{\PaperTitle}{Shared-Donor Inference for Fixed-Set Heterogeneity in Synthetic Difference-in-Differences}

\newcommand{\QuadScaleRatio}{$13$}
\newcommand{\RegionCond}{$29.2$}
\newcommand{\RegionDim}{$17$}
\newcommand{\RegionDimN}{17}
\newcommand{\RegionGaussRadius}{$5.25$}
\newcommand{\RegionRadius}{$6.54$}
\newcommand{\RegionRadiusRatio}{$1.25$}
\newcommand{\RegionRidge}{$1\times10^{-10}$}
\newcommand{\RegionRidgeMove}{$4.4\times10^{-10}$}
\newcommand{\RepFrameSize}{80}
\newcommand{\RepHalfMax}{$11.9$}
\newcommand{\RepLOOMax}{$2.9$}

\newcommand{\SDPluginAppShaped}{$1.271$}
\newcommand{\SDREffectiveRank}{$1.95$}
\newcommand{\SDRMinEig}{$1.2\times10^{-5}$}
\newcommand{\SDRPSDTargetChange}{$1.6\times10^{-20}$}
\newcommand{\SDRRank}{$17$}
\newcommand{\SDStepAppShaped}{$0.762$}
\newcommand{\SDStepRange}{$0.762$ to $1.043$}

\newcommand{\SlopeSensitiveDistance}{$0.546$}
\newcommand{\SlopeWidthPercent}{$0.4$}

\newcommand{\StudEntropyRank}{$7.34$}
\newcommand{\StudPartRank}{$2.07$}

\title{\PaperTitle}
\author{Takahiro Hoshino\thanks{Faculty of Economics, Keio University, Tokyo, Japan; RIKEN Center for Advanced Intelligence Project (AIP), Tokyo, Japan. E-mail: \texttt{bayesian@keio.jp}.} \and Makoto Nakakita\thanks{RIKEN Center for Advanced Intelligence Project (AIP), Tokyo, Japan.}}
\date{September 6, 2026}
\IfFileExists{results_macros.tex}{
\newcommand{\MedSlope}{$-5.93$}

\newcommand{\MedVSDRPoint}{$41.7$}

\newcommand{\MedVAB}{$44.7$}
\newcommand{\MedRhoMean}{$0.704$}
\newcommand{\MedRhoSlope}{$-1.555$}

\newcommand{\MedMeanSEDiag}{$0.458$}
\newcommand{\MedSlopeSEDiag}{$0.439$}

\newcommand{\MedHiloEst}{$-7.32$}

\newcommand{\MedRtwo}{0.826}

\newcommand{\MedRtwoAB}{0.786}
\newcommand{\MedMeanATT}{$-6.34$}
\newcommand{\MedVexplSDRPoint}{$35.1$}
\newcommand{\MedVexplAB}{$35.1$}

\newcommand{\MedPopMeanATT}{$-6.06$}

\newcommand{\MedRtwoDirectionalSet}{$[0.275,\,0.995]$}

\newcommand{\MedCleanRmspeMed}{$1.28$}
\newcommand{\MedCleanRmspeMax}{$3.40$}
\newcommand{\MedPseudoRmspeMed}{$1.29$}
\newcommand{\MedPseudoRmspeMax}{$3.61$}

\newcommand{\MedPrimaryTreated}{$18$}
\newcommand{\MedPrimaryDonors}{$11$}

}{}

\makeatletter
\def\@fnsymbol#1{\ensuremath{\number#1}}
\makeatother
\begin{document}
\maketitle

\begin{spacing}{1}
\begin{abstract}
Empirical studies often estimate several synthetic-control or synthetic
Difference-in-Differences effects from a common donor pool and then summarize
their heterogeneity.  Because the same donors enter several comparisons, the
first-stage errors are jointly distributed and affect different targets
differently.  We characterize three regimes for linear fixed-set summaries:
nonconcentration, root-rate donor exposure, and donor-negligibility.  We also
show that removing shared-donor uncertainty from a quadratic summary requires
the shared-donor covariance to vanish in every direction entering that summary.

We take a valid joint first-order experiment for the effect vector as the input
to the second stage.  It yields simultaneous intervals for a prespecified
linear family, a fixed-grid projection band, a centered effect-vector
confidence region, image sets for total and explained heterogeneity and their
ratio, and a sampling-center homogeneity test.  For the hard-simplex SDID
estimator, an operator-normalized numerical directional bootstrap re-estimates
the donor and time weights in every draw.  We use an exact critical-cone
derivative to check the numerical approximation.

In the Medicaid expansion application, omitting off-diagonal covariance
understates uncertainty for the equal-jurisdiction mean but overstates it for a
centered projection on the baseline uninsured rate.  Persistent counterfactual
mismatch is reported separately through deterministic sensitivity regions.
\end{abstract}
\noindent\textbf{Key Words:} synthetic control; donor sharing; treatment-effect
heterogeneity; variance components; fixed-set inference.
\end{spacing}
\newpage

\section{Introduction}\label{sec:intro}

Many empirical studies estimate a separate synthetic-control or synthetic
Difference-in-Differences (SDID) effect for each treated state, county,
hospital, school district, or firm
\citep{Abadie2010,Arkhangelsky2021,AbadieLHour2021,Robbins2017,Dube2015}.
They then ask where the estimated effects are largest, how much the effects
vary, and how that variation is related to observed group characteristics.
The usual second-stage analysis treats the group estimates as independent,
even when the same donors enter several first-stage comparisons.

When donors are reused, a shock to one donor can enter many estimated effects,
often through a low-rank component.  Its contribution to uncertainty is
therefore target-specific.  A common donor shock may dominate a level target,
such as a mean or an intercept, while contributing little to a centered slope.
It also enters the plug-in dispersion of the group estimates.  A naive
between-group variance then combines effect heterogeneity with first-stage
sampling noise, including the off-diagonal covariance induced by donor reuse.

A one-donor calculation makes the distinction explicit.  Suppose
\(\widehat\tau_g=\tau_g+u_g-\omega_gv\), where \(u_g\) is an own-group shock
and \(v\) is shared.  Then
\[
\widehat{\bar\tau}-\bar\tau=\bar u-\bar\omega v,
\]
whereas a centered slope on a standardized moderator \(\widetilde z_g\)
satisfies
\[
\widehat\beta-\beta
=
\frac{\sum_g\widetilde z_gu_g}{\sum_g\widetilde z_g^2}
-
\frac{\sum_g\widetilde z_g\omega_g}{\sum_g\widetilde z_g^2}v.
\]
The donor term is small when the centered moderator is nearly orthogonal to
the donor loadings.  By contrast, the naive effect dispersion contains
\(\operatorname{Var}_g(\omega_g)v^2
-2\operatorname{Cov}_g(\tau_g,\omega_g)v\).
We use this calculation to distinguish three questions: joint inference for
linear targets, first-stage adjustment for quadratic targets, and the
conditions under which a target retains a donor component.

\subsection{Contribution and scope}

We first characterize three concentration regimes under a prespecified
shared-donor experiment.  A target coefficient can have a nonvanishing donor
loading, a loading that remains in the root-rate law, or a loading that is
negligible at that rate.  These cases correspond, respectively, to failure of
concentration, consistency with a donor contribution in the limiting law, and
donor-negligibility.  For a positive-semidefinite quadratic target, the
complete donor polynomial disappears only under the stronger matrix condition
\(H\Sigma_DH=0\).

Our second contribution is a fixed-set inference procedure based on one joint
effect-vector law.  We use a common set of numerical effect draws to form
simultaneous intervals for the reported linear family and a band for the
projection on a fixed grid.  The same draws yield a confidence region for the
centered effect vector, image sets for total and explained heterogeneity and
their ratio, and a sampling-center homogeneity test.  In the hard-simplex SDID
application, every bootstrap draw re-solves the prespecified first-stage
programs.  The exact critical-cone derivative is used to assess the numerical
approximation.  Persistent counterfactual mismatch remains separate from the
sampling covariance.

\subsection{Relation to existing work}\label{sec:lit}

Synthetic control and SDID were developed primarily for one treated unit, a
pooled treatment effect, or a linear aggregate of counterfactual predictions
\citep{Abadie2010,Abadie2015,Arkhangelsky2021,Li2020,
CattaneoFengTitiunik2021,ChernozhukovWuthrichZhu2021}.  Penalized, augmented,
matrix-completion, generalized, and proximal methods provide alternative first
stages
\citep{AbadieLHour2021,BenMichael2021,Athey2021,Xu2017,
CuiTchetgen2024,ShiMiao2026}.  Our analysis takes the output of a prespecified first stage as its starting
point.  The comparison rule is not selected in this paper.

Existing methods for several treated synthetic controls allow dependence and
simultaneous linear prediction
\citep{CattaneoStaggered2025,CattaneoScpi2025}.  We instead study finite-set
functionals of the realized effect vector.  Total and explained heterogeneity
are quadratic functionals, and the loading matrix generated by donor reuse
determines which linear and quadratic targets retain donor uncertainty.

A second connection is to inference from estimated group effects, best linear
projections, and variance components
\citep{ChettyFriedmanRockoff2014,KSS2020,Bloom2017,
Armstrong2022,IgnatiadisWager2022,ChernozhukovGeneric2018,
SemenovaChernozhukov2021}.  Here, donor reuse and constrained weight
re-estimation generate the off-diagonal first-stage covariance.  In the
Medicaid application, including this covariance increases uncertainty for a
level mean and reduces it for a centered moderator slope.

Throughout the analysis, we fix the treated set, adoption date, donor pool,
comparison algorithm, tuning profile, scalar post-treatment contrast, and
reported target family.  We leave two distinct problems outside the scope of
this paper: selecting a comparison rule from generated aggregate outcomes, and
inference for dynamic group-by-event-time arrays, including minimax quadratic
risk and spectral structure.

\section{Finite-Set Targets and a Joint First-Stage Representation}\label{sec:setup}

\subsection{Targets}\label{sec:targets}

Let \(\Tt\) and \(\Cc\) denote the treated and donor groups, with sizes
\(G_1\) and \(G_0\).  Group \(g\) contains units \(i\), observed in
pre-treatment periods \(\mathcal T_0\) and post-treatment periods
\(\mathcal T_1\).  Unit covariates \(X_{git}\) enter the first stage.  Group
covariates \(W_g=(1,Z_g\tp)\tp\) are used only to summarize heterogeneity.

\begin{definition}[Group-specific ATT]\label{def:group-att}
The post-period group average treatment effect on the treated (ATT) is
\[
\tau_g=|\mathcal T_1|^{-1}\sum_{t\in\mathcal T_1}
\E\{Y_{git}(1)-Y_{git}(0)\mid g,t\},\qquad g\in\Tt.
\]
Individual and dynamic effects may vary; the second stage uses only this
post-period group average.
\end{definition}

Let \(Z\) and \(W\) stack \(Z_g\tp\) and \(W_g\tp\) over the treated groups.

\begin{definition}[Finite-set projection and variance decomposition]\label{def:targets}
Let
\[
M=I_{G_1}-G_1^{-1}\one\one\tp,
\qquad
Z_c=MZ,
\qquad
\Sigma_Z=G_1^{-1}Z_c\tp Z_c,
\]
and define
\[
H_Z=G_1^{-1}Z_c\Sigma_Z^{-1}Z_c\tp.
\]
The matrix \(H_Z\) is symmetric and idempotent, with \(H_Z\one=0\).  Assume
\(\rank(W)=p\), so the projection coefficient and explained variance are
uniquely defined.  Conditional on the realized treated groups and their
covariates, set
\begin{align}
\gamma^*&=\argmin_{\gamma}\sum_{g\in\Tt}(\tau_g-W_g\tp\gamma)^2,
\label{eq:projection-target}\\
V&=G_1^{-1}\tau\tp M\tau,
\qquad C=G_1^{-1}Z_c\tp\tau,\label{eq:variance-targets}\\
V_{\mathrm{expl}}&=C\tp\Sigma_Z^{-1}C,
\qquad R^2=V_{\mathrm{expl}}/V.\label{eq:r2-target}
\end{align}
The scalar \(R^2\) is undefined when \(V=0\).  If the centered effect-vector
confidence region contains the origin, we display \([0,1]\) as a boundary-safe
set rather than assigning a scalar value at the origin.  For \(w\) in the
empirical treated support,
\[
m_{\mathcal T}(w)=w^\top\gamma^*
\]
is the fitted finite-set projection.  Without additional assumptions, it is
neither a prediction for a newly drawn jurisdiction nor the effect of an
intervention on the moderator.
\end{definition}

Because the estimands condition on the realized treated groups, we refer to
them as finite-set parameters.  The term fixed-set asymptotics refers instead
to a repeated-sampling sequence in which \(G_1\) stays fixed while first-stage
information increases.  We do not impose an exchangeable or Gaussian law on
\(\tau_g\); any prespecified low-dimensional basis may be used for the
projection.

\subsection{Joint first-stage representation}\label{sec:factor}

We formulate the second-stage results in terms of a joint experiment for the
estimated effect vector, rather than tying them to one synthetic-control
algorithm.  For probability laws \(P\) and \(Q\) on a finite-dimensional
Euclidean space, write
\[
d_{\mathrm{BL}}(P,Q)
=
\sup_{f\in\mathrm{BL}_1}
\left|\int f\,dP-\int f\,dQ\right|,
\]
where \(\mathrm{BL}_1\) contains the functions bounded by one with Lipschitz
constant at most one.

\begin{assumption}[Joint effect-vector interface]
\label{as:joint-interface}
Fix \(G_1\).  Conditional on a design sigma-field \(\mathcal F_n\),
\[
\widehat\tau_n
=
\mu_n+a_n^{-1}Z_n+r_n,
\qquad
\mu_n=\tau_n+b_n,
\qquad
a_n\to\infty.
\]
Let
\[
m_n=E(Z_n\mid\mathcal F_n),
\qquad
\Omega_n=\operatorname{Var}(Z_n\mid\mathcal F_n).
\]
No mean-zero restriction is imposed on \(m_n\).  Let \(Z_n^*\) be a
conditional bootstrap draw, with
\[
\widehat\Omega_n=\operatorname{Var}^*(Z_n^*),
\qquad
\widehat\Sigma_n^\tau=a_n^{-2}\widehat\Omega_n.
\]
The following conditions hold.

\begin{enumerate}[label=(J\arabic*),leftmargin=2.8em]
\item For a conditionally tight \(Z\),
\[
d_{\mathrm{BL}}
\{\mathcal L(Z_n\mid\mathcal F_n),
\mathcal L(Z\mid\mathcal F_\infty)\}
\to_p0,
\]
and
\[
d_{\mathrm{BL}}
\{\mathcal L^*(Z_n^*),
\mathcal L(Z_n\mid\mathcal F_n)\}
\to_p0.
\]
Write
\(m=E(Z\mid\mathcal F_\infty)\) and
\(\Omega=\operatorname{Var}(Z\mid\mathcal F_\infty)\).

\item For the finite collection \(\mathscr L_n\) of reported linear maps,
including the target coefficients, fixed projection grid, and region metric,
\[
\max_{L\in\mathscr L_n}
\|L(\widehat\Omega_n-\Omega_n)L^\top\|_{\mathrm{op}}
=o_p(1).
\]

\item For the fixed collection
\(\mathscr H=\{H_1,\ldots,H_K\}\) of reported quadratic matrices,
\[
\max_k
|\operatorname{tr}\{H_k(\widehat\Omega_n-\Omega_n)\}|
=o_p(1),
\]
and, whenever regular quadratic covariance is reported,
\[
\max_{k,\ell}
|\mu_n^\top H_k(\widehat\Omega_n-\Omega_n)H_\ell\mu_n|
=o_p(1).
\]

\item For every \(L\in\mathscr L_n\), \(a_nLr_n=o_p(1)\).  For
\(H_k\in\mathscr H\), define
\[
\mathcal R_{H_k,n}
=
\frac{2}{G_1}(\mu_n+a_n^{-1}Z_n)^\top H_kr_n
+
\frac{1}{G_1}r_n^\top H_kr_n.
\]
Then \(a_n\mathcal R_{H_k,n}=o_p(1)\) in the regular regime and
\(a_n^2\mathcal R_{H_k,n}=o_p(1)\) in the local-boundary regime.
\end{enumerate}

When a variance-trace-corrected statistic is described as centered at the
\(a_n^{-1}\) scale, also require
\[
\max_k|\mu_n^\top H_km_n|=o_p(1).
\]
Without this condition, the limiting quadratic law retains the corresponding
first-order mean.
\end{assumption}

Here \(\mu_n=\tau_n+b_n\) is the sampling center.  Any
\(a_n^{-1}\)-order mean produced by the nonlinear first-stage map belongs to
the law of \(Z_n\), rather than to persistent counterfactual mismatch.  We
condition on the comparison rule and its tuning parameters before applying
Assumption~\ref{as:joint-interface}.  A different first stage can be used when
it supplies the same target-relevant joint experiment.

\subsection{Hard-simplex SDID implementation}
\label{sec:hard-simplex-implementation}

For the application, let \(\vartheta\) collect the state--year outcome moments
and define the complete group-level hard-simplex SDID map by
\[
\Phi(\vartheta)
=
\{\widehat\tau_g(\vartheta):g\in\mathcal T\}
\].
We solve one donor-weight program for each treated group and one common
time-weight program.  All of these programs are strongly convex quadratic
problems on probability simplices.  Supplementary Appendix A records the
objectives, KKT systems, and signed-cell coefficients.

On a rate coordinate \(p\in(0,1)\), we use the chart
\[
\mathcal P_{\delta,h}(p)
=
\operatorname{expit}
\left\{
\operatorname{logit}(p)+\delta\frac{h}{p(1-p)}
\right\};
\]
unrestricted coordinates use \(p+\delta h\).  This chart has derivative \(h\)
at zero and keeps every rate coordinate in \((0,1)\).

\begin{proposition}[Hard-simplex implementation of the joint experiment]
\label{prop:hard-simplex-interface}
Suppose the donor- and time-weight criteria satisfy Conditions S.A.1--S.A.4.

\begin{enumerate}[label=(\roman*),leftmargin=2.6em]
\item For fixed fitted weights, the effect-vector error has a signed-cell
representation.  Its shared-donor block can be written
\[
e=u-Dv,
\]
and the corresponding covariance retains all off-diagonal signed-cell terms.

\item The complete map \(\Phi\) is Hadamard directionally differentiable.  Its
derivative is obtained from the critical-cone quadratic programs and the
directional chain rule.

\item Let \(\widehat B_n\) be the actual-scale primitive replicate factor,
\(\widehat s_n=\|\widehat B_n\|_{\mathrm{op}}\),
\(\delta_n=c_\delta\widehat s_n^\gamma\), and
\(0<\gamma<1\).  Under the scale separation, chart, and optimization-error
conditions in Supplementary Appendix A, define
\[
U_n^*
=
\widehat s_n
\frac{
\widetilde\Phi_n\{\mathcal P_{\delta_n,\widehat B_n\xi/\widehat s_n}
(\widehat\vartheta_n)\}
-
\widetilde\Phi_n(\widehat\vartheta_n)
}{\delta_n}.
\]
Then
\[
d_{\mathrm{BL}}
\left[
\mathcal L^*(s_n^{-1}U_n^*),
\mathcal L\{\Phi'_{\vartheta_0}(G_\vartheta)\mid\mathcal F_\infty\}
\right]
\to_p0.
\]
\end{enumerate}
\end{proposition}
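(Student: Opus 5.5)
The plan is to prove the three parts in order, treating the first-stage SDID map as a composition of simplex-constrained strongly convex quadratic programs followed by a linear contrast. Part (ii) is proved by the directional chain rule applied to that composition. Part (iii) is then a consequence of (ii) combined with the numerical-derivative bootstrap theory of Hong and Li (or Fang and Santos), adapted to the logit chart and to operator normalization.

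For (i), fix the fitted donor weights \(\widehat\omega_g\) and time weights \(\widehat\lambda\). Each \(\widehat\tau_g\) is then a bilinear contrast of cell means. Treated-group cells carry coefficients \(\pm\) products of \(\widehat\lambda\) and indicators. Donor cells carry coefficients \(\pm\widehat\omega_{g,j}\widehat\lambda_t\) or \(\pm\widehat\omega_{g,j}/|\mathcal T_1|\).

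Subtracting the population contrast gives the signed-cell representation. Group the errors into two parts:
\begin{itemize}
\item own-cell errors, collected into \(u\), which are specific to group \(g\) because treated groups are disjoint;
\item donor-cell errors, collected into a vector \(v\) indexed by donor-period cells and shared across \(g\).
\end{itemize}
The row of \(D\) for group \(g\) stacks the signed weights attached to donor cells. The identity \(e=u-Dv\) is then exact bookkeeping. The covariance \(\operatorname{Var}(u)+D\operatorname{Var}(v)D^\top-\operatorname{Cov}(u,v)D^\top-D\operatorname{Cov}(v,u)\) keeps every off-diagonal term \(D_{g\cdot}\operatorname{Var}(v)D_{h\cdot}^\top\), since no term is discarded.

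For (ii), S.A.1--S.A.4 give strong convexity, smooth dependence of the quadratic objective on \(\vartheta\), and compactness of the simplices. Each solution map \(\vartheta\mapsto\omega_g(\vartheta)\) or \(\lambda(\vartheta)\) is therefore the solution of a parametric strongly convex QP over a polyhedron, and three facts follow:
\begin{itemize}
\item The solution map is Lipschitz, by strong convexity and a standard perturbation bound.
\item It is Hadamard directionally differentiable. The derivative in direction \(h\) is the unique minimizer of the second-order QP: minimize \(\tfrac12 d^\top Q d+d^\top(\partial_\vartheta\nabla f)h\) over the critical cone \(\{d\in T_{\Delta}(\omega):\nabla f^\top d=0\}\). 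I would cite Shapiro's results on directional differentiability of optimal solutions in polyhedral strongly convex problems, where uniqueness of multipliers is not needed because the constraints are linear.
\item The outer map from \((\vartheta,\omega,\lambda)\) to \(\widehat\tau\) is polynomial and hence Fréchet differentiable.
\end{itemize}
The chain rule for Hadamard directionally differentiable maps composed with Fréchet maps then yields \(\Phi'_{\vartheta_0}\). It is continuous and positively homogeneous in \(h\).

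For (iii), write \(U_n^*/s_n\) as \((\widehat s_n/s_n)\) times the chart-based difference quotient evaluated at direction \(h_n^*=\widehat B_n\xi/\widehat s_n\). The argument has four steps.
\begin{enumerate}
\item The scale-separation condition gives \(\widehat s_n/s_n\to_p1\) and \(h_n^*\rightsquigarrow G_\vartheta/s\)-normalized directions conditionally. It also gives \(\delta_n\to0\) with \(\delta_n\) large relative to the estimation error \(\widehat\vartheta_n-\vartheta_0\) measured in \(s_n\) units, which is where \(0<\gamma<1\) is used.
\item The chart satisfies \(\mathcal P_{\delta,h}(p)=p+\delta h+O(\delta^2 h^2/p(1-p))\) uniformly on compacts bounded away from \(\{0,1\}\). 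Together with Lipschitz continuity of \(\Phi\), this means the chart perturbation differs from \(\widehat\vartheta_n+\delta_n h_n^*\) by \(o(\delta_n)\).
\item The optimization-error condition makes \(\widetilde\Phi_n-\Phi=o_p(\delta_n)\) uniformly.
\item The difference quotient at the random base point \(\widehat\vartheta_n\) must be compared with one at \(\vartheta_0\).
\end{enumerate}
Step 4 is the main obstacle. Hadamard directional differentiability is defined at \(\vartheta_0\), not along moving base points. I would handle it as in Hong and Li: write \(\widehat\vartheta_n=\vartheta_0+s_n\eta_n\) and use the fact that \(\Phi\) is locally affine on each face-pattern piece. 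Because the QP solution map is piecewise affine in a neighborhood (a polyhedral QP with linear parameter dependence up to smooth terms), \(\Phi(\vartheta_0+s_n\eta_n+\delta_n h)-\Phi(\vartheta_0+s_n\eta_n)\) equals \(\Phi'(s_n\eta_n/\delta_n+h)-\Phi'(s_n\eta_n/\delta_n)\) times \(\delta_n\) up to \(o(\delta_n)\). Positive homogeneity and \(s_n/\delta_n\to0\) then reduce this to \(\delta_n\Phi'_{\vartheta_0}(h)\). Finally, the bootstrap continuous mapping theorem in \(d_{\mathrm{BL}}\), applied with the Lipschitz map \(\Phi'_{\vartheta_0}\), delivers the stated convergence conditional on the data.
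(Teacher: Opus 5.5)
Your outline follows the paper's proof. Part (i) is signed-cell bookkeeping. Part (ii) is a critical-cone QP derivative plus the directional chain rule; citing Shapiro is an acceptable substitute for the paper's direct argument from the two variational inequalities and strong monotonicity. Part (iii) is an inflated-step quotient combined with the chart expansion, solver-gap control, and conditional continuous mapping. However, two steps in (iii) rest on claims that the stated conditions do not give.

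\textbf{The scale ratio.} You assert that scale separation yields $\widehat s_n/s_n\to_p1$, and you then pass to the limit in $h_n^*=\widehat B_n\xi/\widehat s_n$. Assumption~\ref{as:primitive-scale} only keeps $\widehat s_n/s_n$ in $[c_s,C_s]$ with probability tending to one. With a deterministic $s_n$ and a fixed number of replicate columns, this ratio need not converge at all. So your argument gives a limit for neither $h_n^*$ nor $(\widehat s_n/s_n)\Phi'_{\vartheta_0}(h_n^*)$. The missing step is positive homogeneity of the directional derivative:
\[
\frac{\widehat s_n}{s_n}\,\Phi'_{\vartheta_0}\!\left(\frac{\widehat B_n\xi}{\widehat s_n}\right)=\Phi'_{\vartheta_0}\!\left(\frac{\widehat B_n\xi}{s_n}\right).
\]
After this, only $s_n^{-1}\widehat B_n\xi$ enters the continuous-mapping step, and its conditional law is the one assumed to converge.

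\textbf{The moving base point.} Your moving-base step relies on $\Phi$ being locally affine on face-pattern pieces, which is false here. Both $H(\vartheta)=A(\vartheta)^\top A(\vartheta)+\rho(\vartheta)I$ and $q(\vartheta)$ are nonlinear in $\vartheta$, and the ridges are data dependent. On a fixed active face the weights are therefore nonaffine smooth functions of $\vartheta$, and $\widehat\tau_g$ is bilinear in the outcomes and the weights in any case.

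The identity you display is nevertheless true, for a simpler reason that needs no piecewise structure. Put $z_n=(\widehat\vartheta_n-\vartheta_0)/\delta_n$, which is $o_p(1)$ precisely because $0<\gamma<1$. The perturbed and base evaluation points are then $\vartheta_0+\delta_n\{z_n+h_n+o_p(1)\}$ and $\vartheta_0+\delta_nz_n$. Hadamard directional differentiability at $\vartheta_0$, applied along these convergent directions, gives $\Phi'_{\vartheta_0}(h)-\Phi'_{\vartheta_0}(0)=\Phi'_{\vartheta_0}(h)$.

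This is exactly the paper's argument. The paper also absorbs the chart remainder into the direction rather than using a separate Lipschitz bound; your Lipschitz route works as well. With these two repairs your proof coincides with the paper's.
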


\begin{proof}
See Supplementary Appendix A.
\end{proof}

We use this numerical law for every primary repeated-sampling statement in the
Medicaid analysis.  The exact critical-cone derivative serves as a numerical
check.  Neither calculation removes persistent mismatch \(b\); Section
\ref{sec:weakhet} treats that component separately.

\section{Shared-Donor Inference for Fixed-Set Heterogeneity}
\label{sec:theory}

\subsection{Fixed-set reporting from a joint effect law}\label{sec:applicability}

The asymptotic sequence keeps \(G_1\) fixed and lets first-stage information
increase at rate \(a_n\).  A reported linear summary is
\(\theta_\ell=\ell^\top\mu\), estimated by
\(\widehat\theta_\ell=\ell^\top\widehat\tau\).  In the application, the
coefficient \(\ell\) defines the equal-jurisdiction mean, the prespecified
population-weighted mean, the centered moderator slope, or the 75th--25th
projected contrast.  We also evaluate the projected curve on a fixed grid.

For a fixed symmetric matrix \(H\), define
\[
Q_H(\mu)=G_1^{-1}\mu^\top H\mu,
\qquad
\widehat Q_{H,n}^{\mathrm{AN}}
=
G_1^{-1}\widehat\tau_n^\top H\widehat\tau_n
-
G_1^{-1}\operatorname{tr}(H\widehat\Sigma_n^\tau).
\]
With \(\Delta_n=\widehat\tau_n-\mu_n\),
\begin{equation}
\widehat Q_{H,n}^{\mathrm{AN}}-Q_H(\mu_n)
=
\frac{2}{G_1}\mu_n^\top H\Delta_n
+
\frac{1}{G_1}
\left\{
\Delta_n^\top H\Delta_n
-
\operatorname{tr}(H\widehat\Sigma_n^\tau)
\right\}.
\label{eq:exact-quadratic-expansion}
\end{equation}
This identity distinguishes regular quadratic inference from the local
sampling-center boundary.

\begin{theorem}[Fixed-set reporting from one joint effect-vector law]
\label{thm:fixedset-transfer}
Suppose Assumption~\ref{as:joint-interface} holds.

\begin{enumerate}[label=(\roman*),leftmargin=2.6em]

\item
Let \(L_n\in\mathscr L_n\) satisfy
\[
\|L_n-L\|_{\mathrm{op}}\to0.
\]
Then
\[
a_nL_n(\widehat\tau_n-\mu_n)
\rightsquigarrow
LZ,
\qquad
L_nZ_n^*
\rightsquigarrow_{\mathbb P}
LZ.
\]

Let
\[
\mathcal A_n
=
\{\ell_{1,n},\ldots,\ell_{J,n}\}
\]
be a prespecified finite family, with fixed \(J\) and limiting variances
bounded away from zero.  Define
\[
\widehat s_{j,n}^2
=
\operatorname{Var}^*
(\ell_{j,n}^\top Z_n^*)
\]
and let \(c_{1-\alpha,n}^{\max}\) be the conditional
\((1-\alpha)\)-quantile of
\[
\max_{1\le j\le J}
\left|
\frac{\ell_{j,n}^\top Z_n^*}{\widehat s_{j,n}}
\right|.
\]
If the limiting maximum has a continuous distribution at its
\((1-\alpha)\)-quantile, the intervals
\[
I_{j,n}
=
\left[
\ell_{j,n}^\top\widehat\tau_n
-
a_n^{-1}c_{1-\alpha,n}^{\max}\widehat s_{j,n},
\quad
\ell_{j,n}^\top\widehat\tau_n
+
a_n^{-1}c_{1-\alpha,n}^{\max}\widehat s_{j,n}
\right]
\]
satisfy
\[
P\left\{
\ell_{j,n}^\top\mu_n\in I_{j,n}
\text{ for every }j
\right\}
\to1-\alpha.
\]
A fixed projected-curve grid is included by adding its coefficient vectors to
\(\mathcal A_n\).

\item
Let \(H_1,\ldots,H_K\in\mathscr H\) be the reported quadratic matrices and
suppose \(\mu_n\to\mu\).  Then
\[
a_n
\left(
\widehat Q_{H_k,n}^{\mathrm{AN}}
-
Q_{H_k}(\mu_n)
\right)_{k=1}^K
\rightsquigarrow
\left(
\frac{2}{G_1}\mu^\top H_kZ
\right)_{k=1}^K.
\]
If, conditionally on \(\mathcal F_\infty\),
\[
Z\sim N(m,\Omega),
\]
the limiting mean and covariance are
\[
\beta_k
=
\frac{2}{G_1}\mu^\top H_km,
\qquad
\Gamma_{k\ell}
=
\frac{4}{G_1^2}
\mu^\top H_k\Omega H_\ell\mu.
\]
Thus variance-trace subtraction does not remove a nonzero first-order mean.

If \(H\in\mathscr H\) is symmetric and idempotent and
\[
a_nH\mu_n\to h,
\]
then
\[
a_n^2\widehat Q_{H,n}^{\mathrm{AN}}
\rightsquigarrow
\frac1{G_1}
\left\{
\|h\|_2^2
+
2h^\top Z
+
Z^\top HZ
-
\operatorname{tr}(H\Omega)
\right\}.
\]
The sampling-center boundary is \(h=0\).  The same display represents the
causal boundary \(H\tau_n=0\) only when
\[
a_nHb_n\to0.
\]

\item
Let
\[
M=I_{G_1}-G_1^{-1}\mathbf1\mathbf1^\top,
\qquad
\eta_n=M\mu_n,
\qquad
\widehat\eta_n=M\widehat\tau_n.
\]
Let \(S_n\) be data measurable and one-to-one on
\(\operatorname{range}(M)\), and suppose it converges on that subspace to an
\(\mathcal F_\infty\)-measurable limit \(S_0\).
Let \(c_{1-\alpha,n}^{\eta}\) be the conditional
\((1-\alpha)\)-quantile of
\[
\|S_nMZ_n^*\|_2
\]
and define
\[
\mathcal C_{\eta,n}
=
\left\{
v\in\operatorname{range}(M):
\|S_na_n(\widehat\eta_n-v)\|_2
\le c_{1-\alpha,n}^{\eta}
\right\}.
\]
If the limiting norm distribution is continuous at its critical value, then
\[
P\{\eta_n\in\mathcal C_{\eta,n}\}
\to1-\alpha.
\]

For \(v\in\operatorname{range}(M)\), define
\[
V(v)=G_1^{-1}v^\top v,
\qquad
E(v)=G_1^{-1}v^\top H_Zv.
\]
The joint quadratic image is
\[
\mathcal C_{VE,n}
=
\left\{
\{V(v),E(v)\}:
v\in\mathcal C_{\eta,n}
\right\}.
\]
Then
\[
\liminf_{n\to\infty}
P\left[
\{V_\mu,V_{\mathrm{expl},\mu}\}
\in
\mathcal C_{VE,n}
\right]
\ge1-\alpha.
\]
Thus the two quadratic targets are covered jointly, not merely by two
unrelated marginal statements.

Put
\[
\underline V_n
=
\inf_{v\in\mathcal C_{\eta,n}}V(v).
\]
If \(\underline V_n>0\), define
\[
\mathcal C_{VER,n}
=
\left\{
\left(
V(v),
E(v),
\frac{E(v)}{V(v)}
\right):
v\in\mathcal C_{\eta,n}
\right\}.
\]
Whenever \(V_\mu>0\),
\[
\liminf_{n\to\infty}
P\left[
\left(
V_\mu,
V_{\mathrm{expl},\mu},
R_\mu^2
\right)
\in
\mathcal C_{VER,n}
\right]
\ge1-\alpha
\]
on sequences for which
\(P(\underline V_n>0)\to1\).
If \(\underline V_n=0\), the scalar ratio is not continuously identified by
this region and the boundary-safe displayed ratio set is \([0,1]\).
At \(V_\mu=0\), the scalar parameter \(R_\mu^2\) is undefined.

\item
For the sampling-center homogeneity null
\[
H_{0,\mu}:M\mu_n=0,
\]
define
\[
T_n^{\mathrm{hom}}
=
a_n^2G_1^{-1}
\widehat\tau_n^\top M\widehat\tau_n,
\qquad
T_n^{*,\mathrm{hom}}
=
G_1^{-1}Z_n^{*\top}MZ_n^*.
\]
If the conditional law of
\[
G_1^{-1}Z^\top MZ
\]
is continuous at its \((1-\alpha)\)-quantile, the ideal
conditional-quantile test that rejects for large
\(T_n^{\mathrm{hom}}\) has asymptotic size \(\alpha\).
A finite number \(B\) of draws gives a Monte Carlo implementation of this
ideal test; the plus-one value has resolution \(1/(B+1)\).
\end{enumerate}
\end{theorem}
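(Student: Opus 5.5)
The proof reduces each part to the joint convergence in (J1) plus the remainder and covariance controls in (J2)--(J4), and then applies the continuous mapping theorem and its bootstrap analogue. We work conditionally on \(\mathcal F_n\) throughout and pass to \(\mathcal F_\infty\) through the bounded-Lipschitz metric. Because \(G_1\) is fixed, every object is finite-dimensional. Conditional weak convergence in \(d_{\mathrm{BL}}\) in probability is therefore preserved under continuous maps, and under sequences of linear maps converging in operator norm: for \(f\in\mathrm{BL}_1\), \(f\circ L_n\) is \(\|L_n\|_{\mathrm{op}}\)-Lipschitz, and \(\|L_n-L\|_{\mathrm{op}}\to0\) together with tightness of \(Z_n\) gives \(L_nZ_n-LZ_n\to_p0\).

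For (i), we write \(a_nL_n(\widehat\tau_n-\mu_n)=L_nZ_n+a_nL_nr_n\). The second term is \(o_p(1)\) by (J4), so Slutsky and (J1) give the first display. The bootstrap statement follows in the same way from the second half of (J1) and the triangle inequality in \(d_{\mathrm{BL}}\).

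For the simultaneous intervals, (J2) gives \(\widehat s_{j,n}^2-\ell_{j,n}^\top\Omega_n\ell_{j,n}=o_p(1)\). We need \(\Omega_n\to\Omega\) in the relevant directions; we take this from variance convergence implied by tightness plus (J1), or else read it as part of the conditional limit. Since the limiting variances are bounded away from zero, the map \(z\mapsto\max_j|\ell_{j}^\top z|/s_j\) is continuous near the limit. The studentized max statistic and its bootstrap counterpart therefore share the limit law. Continuity of that law at its \((1-\alpha)\)-quantile gives consistency of \(c^{\max}_{1-\alpha,n}\), by the standard quantile lemma for conditional weak convergence in probability, and hence coverage \(1-\alpha\). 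Adding the grid rows to \(\mathcal A_n\) is immediate.

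For (ii), we start from identity \eqref{eq:exact-quadratic-expansion} with \(\Delta_n=a_n^{-1}Z_n+r_n\). Multiplied by \(a_n\), the linear term becomes \(2G_1^{-1}\mu_n^\top H_kZ_n\) plus the remainder \(a_n\mathcal R_{H_k,n}\), which is \(o_p(1)\) by (J4). The quadratic term is \(a_n^{-1}\{Z_n^\top H_kZ_n-\operatorname{tr}(H_k\widehat\Omega_n)\}\). This is \(O_p(a_n^{-1})\) because \(Z_n\) is tight and \(\operatorname{tr}(H_k\widehat\Omega_n)\) is bounded by (J3). Since \(\mu_n\to\mu\), the joint limit over \(k\) follows from the continuous mapping theorem. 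The Gaussian mean and covariance formulas are direct computation, and they show that \(m\neq0\) leaves the mean \(\beta_k\).

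In the boundary case we multiply by \(a_n^2\). With \(H\) idempotent we have \(\mu_n^\top H\mu_n=\|H\mu_n\|^2\) and \(a_n\mu_n^\top HZ_n=(a_nH\mu_n)^\top Z_n\to h^\top Z\). The trace term converges to \(\operatorname{tr}(H\Omega)\) by (J3) and the convergence of \(\Omega_n\). The remainder \(a_n^2\mathcal R_{H,n}\) is \(o_p(1)\) by (J4). Finally, \(H\mu_n=H\tau_n+Hb_n\), so \(h=0\) corresponds to \(H\tau_n=0\) only when \(a_nHb_n\to0\).

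For (iii), (i) applied to \(S_nM\) gives \(\|S_na_n(\widehat\eta_n-\eta_n)\|\rightsquigarrow\|S_0MZ\|\), and the bootstrap analogue converges to the same limit. Quantile consistency then yields the coverage of \(\mathcal C_{\eta,n}\), using that \(S_n\) is one-to-one on \(\operatorname{range}(M)\) and converges there to \(S_0\).

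The image statements are pure inclusion arguments. On the event \(\{\eta_n\in\mathcal C_{\eta,n}\}\) we have \((V(\eta_n),E(\eta_n))\in\mathcal C_{VE,n}\). Since \(M\) and \(H_Z\) are symmetric with \(MH_Z=H_Z\), we get \(V(\eta_n)=V_\mu\) and \(E(\eta_n)=V_{\mathrm{expl},\mu}\), so the \(\liminf\) bound follows. For the ratio, we intersect with the event \(\{\underline V_n>0\}\), whose probability tends to one on the stated sequences, so the bound is unchanged.

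For (iv), under \(H_{0,\mu}\) we have \(M\widehat\tau_n=M\Delta_n\), so \(T_n^{\mathrm{hom}}=G_1^{-1}(Z_n+a_nr_n)^\top M(Z_n+a_nr_n)\). Taking \(M\in\mathscr L_n\), (J4) gives \(a_nMr_n=o_p(1)\), so \(T_n^{\mathrm{hom}}\rightsquigarrow G_1^{-1}Z^\top MZ\); the bootstrap statistic has the same limit by (J1). Continuity of that law at its quantile gives size \(\alpha\). The plus-one Monte Carlo \(p\)-value takes values on the grid \(\{1/(B+1),\dots,1\}\), which gives the stated resolution.

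I expect the main obstacle to be the quantile-consistency step for conditional laws converging only in probability in \(d_{\mathrm{BL}}\), together with the step from (J2) to convergence of the studentizing scales \(\widehat s_{j,n}\) and of \(\operatorname{tr}(H\widehat\Omega_n)\) to their limits. My first approach would be a subsequence argument: pass to almost-sure convergence along subsequences and apply Pólya-type uniformity at continuity points.
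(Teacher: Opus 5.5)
Your proposal is correct and follows the paper's proof essentially step for step. The shared steps are: (J4) remainder control with conditional continuous mapping for (i); the exact quadratic expansion scaled by \(a_n\) or \(a_n^2\) for (ii); set inclusion, using \(MH_Z=H_Z\), for the images in (iii); and \(a_nM\widehat\tau_n=MZ_n+o_p(1)\) under the null for (iv). The one item you leave out is the paper's justification of the \([0,1]\) display: \(0\preceq H_Z\preceq M\), so \(0\le E(v)\le V(v)\) on \(\operatorname{range}(M)\).

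On the step you flag yourself: tightness plus (J1) gives neither boundedness nor convergence of \(\Omega_n\) unless \(\|Z_n\|^2\) is uniformly integrable. The conclusions \(\widehat s_{j,n}^2\to_p\operatorname{Var}(\ell_j^\top Z\mid\mathcal F_\infty)\) and \(\operatorname{tr}(H\widehat\Omega_n)\to_p\operatorname{tr}(H\Omega)\) therefore have to be read as part of the maintained interface. The paper's proof relies on them tacitly in the same way.
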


\begin{proof}
See Supplementary Appendix B.
\end{proof}

\begin{remark}[Pointwise and local-uniform boundary validity]
\label{rem:boundary-scope}
Theorem~\ref{thm:fixedset-transfer}(ii) and (iv) apply along every fixed local
sequence with \(a_nH\mu_n\to h\).  Uniformity over compact sets of \(h\)
requires uniform versions of the joint effect-vector and numerical
approximations.  We do not claim uniformity over an additional sequence of
changing first-stage critical regions.  The near-face Monte Carlo designs
therefore assess finite-sample behavior rather than a larger double-uniformity
statement.
\end{remark}

All four parts of Theorem~\ref{thm:fixedset-transfer} use the same conditional
effect-vector law.  They cover the simultaneous linear family, the projection
on a fixed grid, the centered region and its quadratic images, and the raw
sampling-center homogeneity test.  Variance-trace correction is used only for
the quadratic point estimates.  As
Equation~\eqref{eq:exact-quadratic-expansion} makes clear, the linear term
determines the regular quadratic limit, whereas the linear and centered
quadratic terms are of the same order near the boundary.  The target in each
case is \(\mu=\tau+b\); persistent mismatch is considered in
Section~\ref{sec:weakhet}.

\subsection{Shared-donor loading geometry}

To study donor reuse directly, write the first-stage error as
\(u_n-D_nv_n\).  The term \(u_n\) is specific to the treated units, whereas
\(D_nv_n\) is shared through the donor-loading matrix.  Theorem
\ref{thm:donor-regimes} separates three cases: a fixed donor floor, a
root-rate donor term, and donor-negligibility.  The result is stated for the
prespecified own--donor experiment in
Equation~\eqref{eq:two-rate-decomposition}; it is not a minimax impossibility
claim over other estimators or additional sources of information.  The
numerical directional law has a different role: it propagates the complete
adaptive hard-simplex map.

\begin{theorem}
[Shared-donor concentration regimes in a prespecified two-rate experiment]
\label{thm:donor-regimes}

Fix $G_1$ and a scalar post-treatment contrast. Suppose that, for a
prespecified comparison rule,
\begin{equation}
\widehat\tau_n-\mu_n
=
a_{T,n}^{-1}Z_{T,n}
-
a_{D,n}^{-1}D_nZ_{D,n}
+r_n,
\label{eq:two-rate-decomposition}
\end{equation}
where $a_{T,n}\to\infty$, $a_{D,n}>0$,
$\ell_n^\top Z_{T,n}=O_p(1)$, $Z_{D,n}=O_p(1)$, and
$\ell_n^\top r_n=o_p(1)$.

\begin{enumerate}[label=(\roman*),leftmargin=2.6em]
\item
If
\[
a_{D,n}^{-1}
(D_n^\top\ell_n)^\top Z_{D,n}
=o_p(1),
\]
then
\[
\ell_n^\top\widehat\tau_n
-
\ell_n^\top\mu_n
=o_p(1).
\]

\item
Suppose $a_{D,n}\to a_D\in(0,\infty)$,
$D_n^\top\ell_n\to d$,
$Z_{D,n}\rightsquigarrow Z_D$, and
\[
\Pr\{|d^\top Z_D|>\varepsilon\}>0
\]
for some $\varepsilon>0$. Then
\[
\ell_n^\top(\widehat\tau_n-\mu_n)
\rightsquigarrow
-a_D^{-1}d^\top Z_D,
\]
so $\ell_n^\top\widehat\tau_n$ is not consistent for
$\ell_n^\top\mu_n$.

\item
Suppose additionally that
\[
(\ell_n^\top Z_{T,n},Z_{D,n})
\rightsquigarrow
(Z_\ell,Z_D),
\qquad
a_{T,n}\ell_n^\top r_n=o_p(1),
\]
and
\[
q_n
=
\frac{a_{T,n}}{a_{D,n}}D_n^\top\ell_n
\longrightarrow q.
\]
Then
\[
a_{T,n}\ell_n^\top
(\widehat\tau_n-\mu_n)
\rightsquigarrow
Z_\ell-q^\top Z_D.
\]
Thus $q\ne0$ gives a consistent target whose root-$a_{T,n}$ law retains the
donor component, whereas $q=0$ makes donor uncertainty negligible at that
scale.

\item
Consider the fixed-donor limit in which $D_n\to D$ and
\[
E(Z_D\mid\mathcal F_\infty)=0,
\qquad
\operatorname{Var}(Z_D\mid\mathcal F_\infty)=\Omega_D.
\]
Define
\[
\Sigma_D=D\Omega_DD^\top,
\qquad
\mathcal E_D=\ker(\Sigma_D),
\qquad
\mathcal S_D=\operatorname{range}(D\Omega_D^{1/2}).
\]
Then
\[
\mathcal E_D
=
\{\ell:\Omega_D^{1/2}D^\top\ell=0\}
=
\{\ell:\ell^\top DZ_D=0
\ \text{a.s. conditionally on }\mathcal F_\infty\},
\]
and $\mathcal E_D$ is the unique maximal linear subspace on which the limiting
donor component vanishes. Moreover,
$\mathcal E_D^\perp=\mathcal S_D$.

\item
Let
\[
M=I_{G_1}-G_1^{-1}\mathbf1\mathbf1^\top,
\qquad
\bar d=G_1^{-1}D^\top\mathbf1,
\qquad
D_c=MD.
\]
Then
\[
D=\mathbf1\bar d^\top+D_c.
\]
For the equal-weight mean coefficient
$\ell_{\mathrm{eq}}=G_1^{-1}\mathbf1$,
\[
D^\top\ell_{\mathrm{eq}}=\bar d,
\]
whereas every centered coefficient satisfying
$\mathbf1^\top\ell=0$ obeys
\[
D^\top\ell=D_c^\top\ell.
\]
Thus centering removes the purely common donor loading but not differential
donor loading.

\item
Let
\[
W_D=DZ_D,
\qquad
E(W_D\mid\mathcal F_\infty)=0,
\qquad
\operatorname{Var}
(W_D\mid\mathcal F_\infty)=\Sigma_D,
\]
and assume \(E(\|W_D\|^2\mid\mathcal F_\infty)<\infty\).
For symmetric \(H\succeq0\), define
\[
Q_H(v)=G_1^{-1}v^\top Hv.
\]
The donor contribution to the variance-trace-corrected quadratic is the
algebraic identity
\[
\begin{aligned}
&
Q_H(\mu-W_D)-Q_H(\mu)
-
G_1^{-1}\operatorname{tr}(H\Sigma_D)
\\
&\qquad=
-\frac2{G_1}\mu^\top HW_D
+
\frac1{G_1}
\left\{
W_D^\top HW_D
-
\operatorname{tr}(H\Sigma_D)
\right\}.
\end{aligned}
\]
The complete donor polynomial vanishes almost surely for every \(\mu\) if and
only if
\[
HW_D=0
\quad\text{almost surely conditionally on }\mathcal F_\infty,
\]
which is equivalent to
\[
H\Sigma_DH=0.
\]
Because \(H,\Sigma_D\succeq0\), this is also equivalent to
\[
\Sigma_D^{1/2}H\Sigma_D^{1/2}=0.
\]

If the conditional third moments of \(W_D\) vanish---in particular, if its
conditional law is centrally symmetric---the linear and centered-quadratic
terms are uncorrelated.  If
\[
W_D\mid\mathcal F_\infty
\sim N(0,\Sigma_D),
\]
their total conditional variance is
\[
\frac4{G_1^2}
\mu^\top H\Sigma_DH\mu
+
\frac2{G_1^2}
\operatorname{tr}(H\Sigma_DH\Sigma_D).
\]
\end{enumerate}
\end{theorem}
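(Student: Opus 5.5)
The plan is to take the parts in order. Each rests on the decomposition \eqref{eq:two-rate-decomposition} together with Slutsky's theorem, the continuous mapping theorem, and elementary linear algebra for positive-semidefinite matrices.

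For (i), premultiply \eqref{eq:two-rate-decomposition} by \(\ell_n^\top\). This gives
\[
\ell_n^\top(\widehat\tau_n-\mu_n)=a_{T,n}^{-1}\ell_n^\top Z_{T,n}-a_{D,n}^{-1}(D_n^\top\ell_n)^\top Z_{D,n}+\ell_n^\top r_n .
\]
The first term is \(o_p(1)\) because \(a_{T,n}\to\infty\) and \(\ell_n^\top Z_{T,n}=O_p(1)\). The second term is \(o_p(1)\) by hypothesis, and the third by assumption.

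For (ii), the first and third terms vanish as in (i). Since \(a_{D,n}^{-1}(D_n^\top\ell_n)\to a_D^{-1}d\) deterministically and \(Z_{D,n}\rightsquigarrow Z_D\), Slutsky's theorem gives the stated limit \(-a_D^{-1}d^\top Z_D\). This limit puts mass outside \([-\varepsilon,\varepsilon]\), so the difference is not \(o_p(1)\). Formally, apply the portmanteau theorem to the open set \(\{|x|>\varepsilon\}\) to conclude \(\liminf_n\Pr\{|\ell_n^\top(\widehat\tau_n-\mu_n)|>\varepsilon\}>0\).

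For (iii), multiply by \(a_{T,n}\). This gives
\[
a_{T,n}\ell_n^\top(\widehat\tau_n-\mu_n)=\ell_n^\top Z_{T,n}-q_n^\top Z_{D,n}+a_{T,n}\ell_n^\top r_n .
\]
Apply the continuous mapping theorem to the jointly convergent pair, and use \(q_n\to q\) and Slutsky's theorem for the remainder. When \(q\neq0\), consistency still follows because \(a_{T,n}\to\infty\).

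For (iv), start from \(\ker(D\Omega_DD^\top)=\ker(\Omega_D^{1/2}D^\top)\). This holds because \(\ell^\top\Sigma_D\ell=\|\Omega_D^{1/2}D^\top\ell\|^2\). Next, the scalar \(\ell^\top DZ_D\) has conditional mean zero and conditional variance \(\ell^\top\Sigma_D\ell\). It therefore vanishes almost surely if and only if \(\ell\in\ker\Sigma_D\); this gives the second characterization. The set \(\mathcal E_D\) is a linear subspace and contains every \(\ell\) on which the donor component vanishes, which yields both maximality and uniqueness. Finally, \(\ker(A^\top)^\perp=\operatorname{range}(A)\) with \(A=D\Omega_D^{1/2}\) gives \(\mathcal E_D^\perp=\mathcal S_D\).

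Part (v) is algebra: \(D=(M+G_1^{-1}\one\one^\top)D\). If \(\one^\top\ell=0\), then \(\one\bar d^\top\) contributes nothing to \(D^\top\ell\).

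For (vi), expand the quadratic form \((\mu-W_D)^\top H(\mu-W_D)\) to obtain the identity. For the equivalence, first suppose \(HW_D=0\) almost surely; then both terms vanish, since \(\operatorname{tr}(H\Sigma_D)=E(W_D^\top HW_D)=0\). Conversely, suppose the polynomial vanishes for every \(\mu\). The coefficient linear in \(\mu\) then forces \(HW_D=0\) almost surely. The equivalence \(HW_D=0\) a.s. \(\iff H\Sigma_DH=0\) follows because \(E\|HW_D\|^2=\operatorname{tr}(H\Sigma_DH)\). For \(\Sigma_D^{1/2}H\Sigma_D^{1/2}=0\), write \(H=H^{1/2}H^{1/2}\). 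Then \(H\Sigma_DH=0\) implies \(H^{1/2}\Sigma_D^{1/2}=0\) via the trace \(\operatorname{tr}(H^{1/2}\Sigma_DH^{1/2}H^{1/2})\); more cleanly, \(\Sigma_D^{1/2}H\Sigma_D^{1/2}=0\iff H^{1/2}\Sigma_D^{1/2}=0\iff H\Sigma_D^{1/2}=0\iff H\Sigma_DH=0\), using \(\operatorname{range}(H^{1/2})=\operatorname{range}(H)\).

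For the moment statements, the covariance of the linear and centered quadratic terms is \(-\tfrac{2}{G_1^2}E\{\mu^\top HW_D\,W_D^\top HW_D\}\), a linear combination of third moments. It therefore vanishes under zero third moments. In the Gaussian case, the variance of \(\mu^\top HW_D\) is \(\mu^\top H\Sigma_DH\mu\), and the Isserlis formula gives \(\operatorname{Var}(W_D^\top HW_D)=2\operatorname{tr}(H\Sigma_DH\Sigma_D)\). Adding the two variances with the \(4/G_1^2\) and \(1/G_1^2\) prefactors gives the displayed sum.

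The main obstacle is the "only if" direction in (vi). "For every \(\mu\)" must be read as holding on a common almost-sure event, or at least for each \(\mu\) separately. To handle the second reading, I would specialize to \(\mu=0\) and to \(\mu=e_j\), and subtract to isolate \(e_j^\top HW_D=0\) almost surely for each \(j\). This forces \(HW_D=0\) almost surely.
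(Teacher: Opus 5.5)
Your proposal is correct and follows the paper's proof essentially step for step. This includes premultiplying the decomposition by \(\ell_n^\top\), using \(\ell^\top\Sigma_D\ell=E\{(\ell^\top DZ_D)^2\}\) for (iv), comparing the donor polynomial at two values of \(\mu\) to isolate \(HW_D=0\), and using \(E\|HW_D\|^2=\tr(H\Sigma_DH)\). Your only additions are the explicit portmanteau step in (ii) and the chain \(\Sigma_D^{1/2}H\Sigma_D^{1/2}=0\iff H^{1/2}\Sigma_D^{1/2}=0\iff H\Sigma_D^{1/2}=0\iff H\Sigma_DH=0\), which fills in what the paper merely asserts; the garbled trace remark before that chain is unnecessary and can be dropped.
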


\begin{proof}
See Supplementary Appendix B.
\end{proof}

For a linear target, Theorem~\ref{thm:donor-regimes} separates concentration
from root-rate regularity through the loading \(D^\top\ell\).  Equivalently,
the coefficient is located relative to
\[
\mathcal E_D=\ker(\Sigma_D),
\qquad
\mathcal S_D=\operatorname{range}(D\Omega_D^{1/2}).
\]
Centering removes the pure common component \(\mathbf1\bar d^\top\), but not
the differential loading \(D_c\).  It therefore need not make a target donor
robust or invariant to a change in the donor pool.

Quadratic targets require more than a small loading for one linear
coefficient.  Their donor contribution contains both the interaction with the
sampling center and a centered quadratic term, and the complete donor
polynomial disappears uniformly only when \(H\Sigma_DH=0\).  In the Medicaid
analysis we report continuous donor exposures for the prespecified linear
targets and the two corresponding donor scales for total and explained
heterogeneity.  No reported result uses an outcome residualizer or a
Riesz-corrected first stage.

\subsection{Reported estimators and one inferential law}
\label{sec:reported-estimators}

We write a reported linear target and its estimator as
\[
\theta_\ell=\ell^\top\mu,
\qquad
\widehat\theta_\ell=\ell^\top\widehat\tau.
\]
Primary uncertainty for the finite linear family is given by the max-statistic
interval in Theorem~\ref{thm:fixedset-transfer}(i).  Applying the same
construction to a fixed empirical grid gives the projected-curve band.

For \(H=M\) and \(H=H_Z\), define the variance-trace-corrected point
estimators
\[
\widehat Q_H^{\mathrm{AN}}
=
G_1^{-1}\widehat\tau^\top H\widehat\tau
-
G_1^{-1}\operatorname{tr}(H\widehat\Sigma^\tau).
\]
Thus
\[
\widehat V_{\mathrm{AN}}
=
\widehat Q_M^{\mathrm{AN}},
\qquad
\widehat V_{\mathrm{expl,AN}}
=
\widehat Q_{H_Z}^{\mathrm{AN}}.
\]
These are the reported quadratic point estimates.  Their primary uncertainty
statements are the globally optimized images of the whole centered effect-vector region
from Theorem~\ref{thm:fixedset-transfer}(iii)--(iv), not Wald intervals around
the analytic points.

For the sampling-center homogeneity null
\[
H_{0,\mu}:M\mu=0,
\]
the sole primary test is the raw quadratic test in
Theorem~\ref{thm:fixedset-transfer}(iv).  The official SDR and exact
critical-cone calculations are benchmarks in the Supplement.  The
ratio \(R^2\) is defined only for \(V>0\); if the effect region contains the
origin, the boundary-safe displayed set is \([0,1]\).

The household A/B cross-product is an assumption-distinct diagnostic.  If the
two half-sample first stages have conditional means \(\mu_A,\mu_B\), then
\[
E\left[
G_1^{-1}\widehat\tau_A^\top H\widehat\tau_B
-
G_1^{-1}\operatorname{tr}(H\widehat\Sigma^{AB})
\mid\mathcal F
\right]
=
G_1^{-1}\mu_A^\top H\mu_B.
\]
It estimates \(Q_H(\mu)\) only under the same-center condition
\(\mu_A=\mu_B=\mu\); the general statement and proof are in Supplementary
Appendix~B.

\subsection{Causal targets and mismatch sensitivity}\label{sec:weakhet}

The preceding sampling results are centered at $\mu=\tau+b$.  They apply to
the causal-effect vector $\tau$ when the target-specific projection of $b$ is
negligible at the normalization used by the relevant theorem.

\paragraph{Holdout benchmark for the mismatch scale.}
Using the prespecified two-year pseudo-post split, we re-estimate the complete
first-stage procedure on the remaining pre-treatment years.  Let
\(\widehat b_g^{\mathrm{val}}\) be the absolute pseudo-post prediction error and
let \(d_g\) be the intercept-adjusted RMSPE from the primary first stage.  Define
\[
\widehat\chi_g^{\mathrm{val}}
=
\frac{
|\widehat b_g^{\mathrm{val}}|
}{d_g\vee\underline d},
\qquad
\widehat\chi_{\max}^{\mathrm{val}}
=
\max_{g\in\mathcal T}
\widehat\chi_g^{\mathrm{val}}.
\]
The application reports the upward-rounded value
\[
\widehat\chi_{\mathrm{anchor}}
=
\frac{
\left\lceil
100\widehat\chi_{\max}^{\mathrm{val}}
\right\rceil
}{100}
\]
as one holdout-benchmarked point on the deterministic sensitivity curve
\[
\mathcal B(\chi)
=
\{b:|b_g|\le\chi d_g\}.
\]
Neither \(\widehat\chi_{\max}^{\mathrm{val}}\) nor
\(\widehat\chi_{\mathrm{anchor}}\) is assigned a coverage probability.  The
holdout calculation fixes the scale of one displayed sensitivity analysis; it
does not transport validation error probabilistically to the post-treatment
counterfactual error.

\section{Monte Carlo Evidence}\label{sec:mc}

We evaluate the numerical directional procedure in four designs: a stable
first stage, a near-face configuration, an exact face boundary, and a design
shaped to resemble the application.  Each row of
Table~\ref{tab:mc-procedure} uses \MCReps\ Monte Carlo replications and
\MCDraws\ conditional draws.

\begin{table}[t]\centering\footnotesize
\caption{Monte Carlo performance of the procedure used in the application.}\label{tab:mc-procedure}
\begin{tabular}{lrrrr}
\toprule
Audited design & SD ratio & Simultaneous FWER & Region coverage & Failures\\
\midrule
Stable face & $1.052$ & $0.950$ & $0.963$ & $0$\\
Near face & $1.056$ & $0.944$ & $0.963$ & $0$\\
Zero margin & $1.057$ & $0.941$ & $0.963$ & $0$\\
Application shaped & $1.304$ & $0.984$ & $0.993$ & $0$\\
\bottomrule
\end{tabular}

\tablenote{Estimand: the four prespecified linear targets and the centered
effect-vector region.  Design: four simulation designs, whose active-face geometry is verified at the population panel, at \MCReps\ replications
with \MCDraws\ draws; the degenerate point-mass stress design is in
Supplementary Appendix~C.  Primary law: the numerical directional bootstrap of
Proposition~\ref{prop:hard-simplex-interface}(iii), re-solving the complete map
along a normalized primitive direction at the prespecified step.  Familywise
coverage is simultaneous over the four targets and is the calibration criterion;
the region is calibrated at $0.95$.  SD ratio is the average bootstrap standard
deviation over the Monte Carlo standard deviation of the equal-weight mean.
Failures counts optimization failures and natural-domain violations together.
Monte Carlo standard error on a coverage rate is \MCCovMCSE\ at this replication
count.}
\end{table}

Familywise coverage is \MCFwerStable\ in the stable design,
\MCFwerNearFace\ near an active-face boundary, and \MCFwerZeroMargin\ at the
exact boundary.  The corresponding Monte Carlo intervals include $0.95$.
Region coverage is \MCRegionRegular.  No optimization or natural-domain
failure occurs; the latter follows from the logit chart used in every draw.

In the application-shaped design, familywise coverage is
\MCFwerAppShaped and the SD ratio is \MCSDRatioAppShaped; the corresponding
ratio in the other three designs is \MCSDRatioRegular.  The resulting
marginal intervals are conservative.  Supplementary Appendix~C separates the
plug-in, numerical-step, and first-order contributions and reports the
point-mass configuration as a deliberately degenerate stress design.

Supplementary Appendix~E reports a separate fixed-donor concentration
calculation.  A null-space loading contracts at the own-noise rate, whereas a
fixed positive loading approaches its donor variance floor.  Intervals based on
the full variance retain nominal coverage in both cases; only the former
concentrate to a point.

\section{Medicaid Expansion}\label{sec:empirical}

\subsection{Policy design, data, and targets}
\label{sec:med-design}

We study implementation of the ACA Medicaid expansion on January 1, 2014.  The
primary treated set contains \MedPrimaryTreated\ jurisdictions, after excluding
seven jurisdictions with an earlier adult expansion or waiver.  The primary
D11 donor set is a prespecified core group of nonexpansion states.  D17 adds
Wisconsin and five states that adopted after the 2014--2019 outcome window and
is used only for sensitivity.  Supplementary Appendix~D lists the jurisdictions
and records the policy sources.

We use the 2008--2019 ACS one-year Public Use Microdata Sample.
For state \(k\) and year \(t\), let \(\mathcal I_{kt}\) contain adults aged
19--64 who satisfy the contemporaneous low-income definition.  We define the
state--year outcome by
\[
Y_{kt}
=
\frac{\sum_{i\in\mathcal I_{kt}}PWGTP_{ikt}
\mathbf1\{\text{uninsured}_{ikt}=1\}}
{\sum_{i\in\mathcal I_{kt}}PWGTP_{ikt}}.
\]
The estimand is a repeated-cross-section effect for this annual population,
not a fixed-cohort effect.  The pre-treatment years are 2008--2013 and the
post-treatment years are 2014--2019.

The jurisdiction is the second-stage unit.  We condition on the realized
treated set, the full-sample baseline moderator, the moderator quantiles, and
the prespecified population weights.  The ACS person and household sampling
design supplies the repeated-sampling uncertainty.  The 80 replicate columns
are directions in the released survey frame, not 80 independent
jurisdictions.  We distinguish the structural donor component in
Theorem~\ref{thm:donor-regimes} from survey error in donor state--year rates,
the response of the fitted weights to those errors, and persistent mismatch.

The intercept-adjusted pre-fit RMSPE is
\[
d_g
=
\left[
T_0^{-1}\sum_{t\in\mathcal T_0}
\left\{
(Y_{gt}-\bar Y_{g,0})
-
\widehat\omega_g^\top(Y_{Ct}-\bar Y_{C,0})
\right\}^2
\right]^{1/2}.
\]
Its median and maximum are \MedCleanRmspeMed\ and \MedCleanRmspeMax\ percentage
points.  The corresponding pseudo-treated donor values are
\MedPseudoRmspeMed\ and \MedPseudoRmspeMax.  All treated jurisdictions fall within the full pseudo-treated range.  We
therefore retain the policy-defined cohort without using fit to revise its
membership.

Let \(N_g\) be the average 2008--2013 ACS weighted population for jurisdiction
\(g\), let \(\pi_g=N_g/\sum_hN_h\), and let \(z_g\) be its standardized
2008--2013 mean uninsured rate.  The four prespecified coefficients are
\[
\ell_{\mathrm{eq}}=G_1^{-1}\mathbf1,
\qquad
\ell_{\mathrm{pop}}=\pi,
\qquad
\ell_{\mathrm{slope}}=\frac{z}{z^\top z},
\qquad
\ell_{75-25}=(z_{.75}-z_{.25})\ell_{\mathrm{slope}}.
\]
We compute the percentiles over the realized treated support.  The slope is a
finite-set effect-modification projection on that support; it neither
identifies the effect of changing the baseline uninsured rate nor predicts the
effect for a new jurisdiction.

\subsection{ACS law and linear summaries}
\label{sec:med-survey-law}\label{sec:med-first}

Let \(\widehat\vartheta\) stack the full-sample state--year rates and let
\(\widehat\vartheta^{(r)}\), \(r=1,\ldots,80\), be the corresponding official
replicate estimates.  We use
\[
B_{80}
=
\sqrt{\frac4{80}}
\begin{bmatrix}
\widehat\vartheta^{(1)}-\widehat\vartheta&\cdots&
\widehat\vartheta^{(80)}-\widehat\vartheta
\end{bmatrix},
\qquad
\widehat s=\|B_{80}\|_{\mathrm{op}},
\]
and draw \(\widetilde G_b^*=B_{80}\xi_b/\widehat s\), with
\(\xi_b\sim N(0,I_{80})\).  For
\(\delta=c\widehat s^\gamma\), the actual-scale effect draw is
\[
U_b^*
=
\widehat s
\frac{\Phi\{\mathcal P_{\delta,\widetilde G_b^*}
(\widehat\vartheta)\}-\Phi(\widehat\vartheta)}{\delta},
\]
where \(\mathcal P\) is the rate-scale logit chart.  Every donor- and
 time-weight program is re-solved.  The reported \MedDirDraws\ directions had
no solver failure; the chart rules out an out-of-domain rate.  The exact
critical-cone calculation uses 1,500 directions as a numerical check.  The
common-index cross-year coupling is a maintained construction; Supplementary
Appendix D reports the year-block alternative.

\begin{figure}[t]\centering
\includegraphics[width=.94\textwidth]{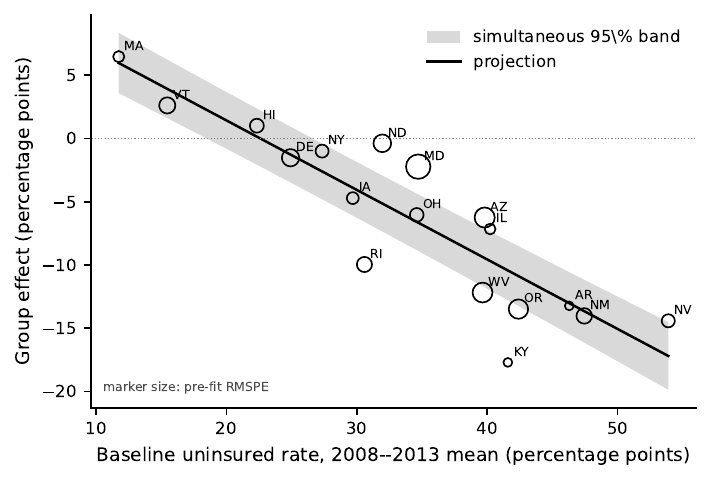}
\caption{State-specific Medicaid effects and their projection on baseline uninsured rates.}\label{fig:acs-bite}
\fignote{The figure reports the finite-set projection of the sampling-center effect
vector on the 2008--2013 mean uninsured rate over the realized
\MedPrimaryTreated-jurisdiction support.  Units are percentage points.  The band is simultaneous over the grid, by
Theorem~\ref{thm:fixedset-transfer}(i), and is centered at the sampling-center
projection.  Design: T\MedPrimaryTreated${}\times{}$D\MedPrimaryDonors.}
\end{figure}

\begin{table}[t]
\centering\footnotesize
\caption{Primary Medicaid linear summaries and target-specific donor exposure.}
\label{tab:med-linear-route1}
\begin{tabularx}{\textwidth}{Lcccc}
\toprule
Target
& Estimate
& Simultaneous 95\% CI
& Donor exposure
& Diagonal benchmark SE\\
\midrule
Equal-jurisdiction mean ATT (pp)
& \MedMeanATT
& \MedMeanDirectionalCI
& \ExposureMean
& \MedMeanSEDiag\\
Population-weighted mean ATT (pp)
& \MedPopMeanATT
& \MedPopMeanDirectionalCI
& \ExposurePopMean
& --\\
Centered baseline uninsured-rate slope (pp/SD)
& \MedSlope
& \MedSlopeDirectionalCI
& \ExposureSlope
& \MedSlopeSEDiag\\
75th--25th projected contrast (pp)
& \MedHiloEst
& \MedHiloDirectionalCI
& \ExposureHilo
& --\\
\bottomrule
\end{tabularx}
\tablenote{The four intervals use one studentized max-statistic critical value
and cover the corresponding sampling-center summaries simultaneously under
the maintained joint effect-vector law.  Donor
exposure is
\(\mathfrak g_D(\ell)=\{\ell^\top\widehat\Sigma_D\ell\}^{1/2}\), in
percentage points.  The last column reports standard errors after deleting off-diagonal
covariance and is shown only for the mean and slope.  A causal ATT
interpretation also requires the corresponding linear functional of persistent
mismatch to satisfy the reported sensitivity restriction.}
\end{table}

The full and diagonal standard errors for the equal mean are
\MedMeanSEFull\ and \MedMeanSEDiag\ percentage points.  For the centered slope
they are \MedSlopeSEFull\ and \MedSlopeSEDiag\ percentage points per standard
deviation.  Omitting off-diagonal covariance therefore understates uncertainty
for the mean and overstates it for the slope.  The signed off-diagonal
contributions relative to the full variance are \MedRhoMean\ and
\MedRhoSlope; the latter is not a variance-component share.

For the equal mean, donor exposure is \ExposureMean\ percentage points, and
the coefficient lies in the estimated donor-sensitive complement.  The
centered slope has exposure \ExposureSlope\ and relative donor-sensitive
distance \SlopeSensitiveDistance.  Thus centering removes the common loading
without necessarily removing the differential loading.  The four target
exposures are reported in Supplementary Appendix~D, with threshold sensitivity
in Supplementary Appendix~B.

\subsection{Quadratic heterogeneity and policy-design sensitivity}
\label{sec:med-trace}\label{sec:med-design-geometry}\label{sec:med-quadratic}

For \(H=M\) and \(H=H_Z\), we report
\[
\widehat Q_H^{\mathrm{AN}}
=
G_1^{-1}\widehat\tau^\top H\widehat\tau
-
G_1^{-1}\operatorname{tr}(H\widehat\Sigma^\tau).
\]
The variance-trace correction targets the sampling center
\(\mu=\tau+b\).  Under the regular centering condition it subtracts the
estimated covariance trace, but it does not remove persistent mismatch or a
nonzero first-order mean in the directional law.  We obtain the confidence
sets by optimizing each quadratic summary over the full centered effect-vector
region.  Table~\ref{tab:med-quadratic-route1} reports the point estimates,
these image sets, and the household A/B diagnostic in separate columns.

\begin{table}[t]
\centering\small
\caption{Variance-trace-corrected Medicaid heterogeneity summaries and confidence sets.}
\label{tab:med-quadratic-route1}
\begin{tabularx}{\textwidth}{>{\raggedright\arraybackslash}p{0.31\textwidth}CCC}
\toprule
Target
& Variance-trace-corrected point estimate
& Primary 95\% image set
& Household A/B check\\
\midrule
Total heterogeneity \(V_\mu\) (pp\(^2\))
& \MedVSDRPoint
& \MedVDirectionalSet
& \MedVAB\\
Explained heterogeneity \(V_{\mathrm{expl},\mu}\) (pp\(^2\))
& \MedVexplSDRPoint
& \MedVexplDirectionalSet
& \MedVexplAB\\
Explained share \(R_\mu^2\)
& \MedRtwo
& \MedRtwoDirectionalSet
& \MedRtwoAB\\
\bottomrule
\end{tabularx}
\tablenote{The point estimates subtract the estimated covariance trace from the full
sample quadratic.  The confidence sets are globally optimized images of the
entire numerical-directional centered effect-vector region, not Wald intervals
around those points.  The A/B quantities
use separately re-estimated household halves and are assumption-distinct
checks; they target the same quadratic object only under the same-center
condition in Supplementary Appendix~B.  The scalar \(R^2\) is defined only for \(V>0\); \([0,1]\) is displayed
when the global denominator margin is zero.}
\end{table}

\begin{table}[t]
\centering\small
\caption{Policy-design sensitivity and donor geometry.}
\label{tab:med-design-comparison}
\begin{tabular}{lrrrrrrr}
\toprule
Design & \multicolumn{2}{c}{Equal mean} & \multicolumn{2}{c}{Centered slope} & $\|\bar d\|$ & $\|D_c\|_F$ & $\dim\mathcal E_D$\\
\cmidrule(lr){2-3}\cmidrule(lr){4-5}
 & Estimate & Width & Estimate & Width & & &\\
\midrule
T18 x D11 & $-6.34$ & $4.16$ & $-5.93$ & $1.26$ & $0.320$ & $0.654$ & $7$\\
T18 x D17 & $-6.61$ & $3.21$ & $-5.91$ & $1.25$ & $0.263$ & $0.697$ & $3$\\
T25 x D17 & $-6.25$ & $2.40$ & $-5.86$ & $1.18$ & $0.254$ & $0.899$ & $8$\\
\bottomrule
\end{tabular}

\tablenote{Effect estimates and interval widths are in percentage points for
the equal mean and percentage points per standard deviation for the centered
slope.  Width is the length of the simultaneous 95\% interval.  T18-by-D11 is
the sole primary design; the other rows are prespecified design sensitivities.
The T18-by-D17 row changes only the donor pool.  The historical T25-by-D17
row changes both the cohort and the donor pool and is not used to isolate a
donor-pool effect.}
\end{table}

Holding the T18 cohort fixed, replacing D17 by D11 widens the equal-mean
interval by \MeanWidthPercent\ percent and changes the slope width by
\SlopeWidthPercent\ percent.  The comparison is empirical rather than a
consequence of invariance: a centered coefficient removes the common loading,
but the differential loading may change with the donor pool.

For a positive-semidefinite quadratic target, the two donor scales are
\[
\mathfrak a_H^2
=
\frac4{G_1^2}\mu^\top H\Sigma_DH\mu,
\qquad
\mathfrak k_H^2
=
\frac2{G_1^2}\operatorname{tr}(H\Sigma_DH\Sigma_D).
\]
Neither \(M\Sigma_DM\) nor \(H_Z\Sigma_DH_Z\) vanishes in the primary design.
For total heterogeneity the first scale is about \QuadScaleRatio\ times the
second.  Supplementary Appendix~D gives both scales and the
matrix-annihilation residual.  The primary region uses the
prespecified diagonal-standardization metric.  Supplementary Appendix~D reports
ridge and whitening sensitivities, and Supplementary Appendix~B gives the
global trust-region certificate.

\subsection{Boundary and mismatch sensitivity}
\label{sec:med-robust}

The sampling-center homogeneity test compares
\[
T_{\mathrm{obs}}
=G_1^{-1}\widehat\tau^\top M\widehat\tau
\quad\text{with}\quad
T_b^*=G_1^{-1}U_b^{*\top}MU_b^*.
\]
None of the \MedDirDraws\ directional values exceeds the observed statistic,
so the finite-run plus-one value is
\[
p_{\mathrm{boot}}
=
\MedBoundaryDirPN
=
1/(\MedDirDrawsN+1).
\]
This is the resolution of the conditional Monte Carlo calculation.  The null is
\(M\mu=0\); it is a causal homogeneity null only under the corresponding
centered-mismatch restriction.

For causal sensitivity, let
\[
\mathcal B(\chi)=\{b:|b_g|\le\chi d_g\}.
\]
If a sampling region covers \(\mu\) and \(b\in\mathcal B(\chi)\), its
Minkowski enlargement by \(-\mathcal B(\chi)\) covers \(\tau\).  In particular,
a linear interval expands by
\[
\chi\sum_{g=1}^{G_1}|\ell_g|d_g
\]
on each side.  The multipliers \(1\), \(2\), and the holdout-anchored value
\(\MedHoldoutChi\) index deterministic sensitivity sets, not confidence
levels.  At \(\chi=\BiasAwareChi\), the equal-mean and centered-slope intervals
are \BiasAwareMeanAnchor\ and \BiasAwareSlopeAnchor.  The lower values for
total heterogeneity are \MedLowerVOne\ and \MedLowerVTwo\ pp\(^2\) at
\(\chi=1\) and \(2\); at the holdout-anchored multiplier, the lower values for
total and explained heterogeneity are \MedLowerVAnchored\ and
\MedLowerVexplAnchored\ pp\(^2\).  The total-heterogeneity breakdown multiplier
is \MedBreakdownChi.  No validation-to-post coverage probability is assigned to
these sets.

The policy list and the remaining fit, leave-one-jurisdiction,
target-coefficient, cross-year-coupling, replicate-frame, and region-metric
checks appear in Supplementary Appendix D.

\section{Discussion}

We derived three concentration regimes for linear summaries of a prespecified
shared-donor effect vector.  A target may retain a fixed donor floor, lose that
floor while keeping a donor term in its root-rate law, or be donor-negligible
at that rate.  Centering eliminates the common loading, but generally not the
differential loadings.  For a positive-semidefinite quadratic target, uniform
donor robustness requires the stronger condition \(H\Sigma_DH=0\).

The Medicaid results make the target dependence visible.  Omitting
off-diagonal covariance reduces reported uncertainty for the
equal-jurisdiction mean and increases it for the centered projection on the
baseline uninsured rate.  Linear donor exposure is small for the centered
slope, while total and explained heterogeneity retain nonzero quadratic donor
scales.  No single treatment of the dependence is conservative for every
reported target.

One numerical directional law underlies all primary uncertainty statements.
We apply it to the simultaneous linear family, the projection on a fixed grid,
the centered region, the quadratic image sets, and the sampling-center
boundary test.  The hard-simplex calculation provides one implementation of
the joint interface.  A different first stage must supply its own joint law
and the target-relevant covariance conditions used here.

The repeated-sampling target is \(\mu=\tau+b\).  Neither covariance
estimation, variance-trace correction, nor the directional bootstrap removes
persistent counterfactual mismatch.  The RMSPE-scaled sets instead show how
the conclusions change under a stated deterministic restriction on that
mismatch.  Throughout, we condition on a fixed treated set, a common adoption
date, one scalar post-treatment contrast, and a prespecified comparison rule.
Selecting the rule from generated aggregate outcomes and analyzing dynamic
effect arrays are separate problems.

\clearpage
\appendix
\numberwithin{theorem}{section}
\numberwithin{lemma}{section}
\numberwithin{proposition}{section}
\numberwithin{corollary}{section}
\numberwithin{assumption}{section}
\numberwithin{definition}{section}
\numberwithin{remark}{section}
\numberwithin{equation}{section}
\numberwithin{table}{section}
\numberwithin{figure}{section}
\renewcommand{\thesection}{\Alph{section}}
\setcounter{section}{0}

\begin{center}
{\Large\bfseries Supplementary Material}
\end{center}
\medskip

\noindent Appendix~\ref{app:first-stage} describes the hard-simplex SDID map and
the ACS replicate construction, and Appendix~\ref{app:proofs} contains the
proofs and computational lemmas used in the main article.  The remaining
appendices report the procedure-matched numerical diagnostics
(Appendix~\ref{app:mc}), the Medicaid implementation and prespecified
sensitivities (Appendix~\ref{app:medicaid}), and a short fixed-donor
concentration calculation (Appendix~\ref{app:fixed-donor}).
\bigskip

\renewcommand{\thesection}{\Alph{section}}
\setcounter{section}{0}
\setcounter{table}{0}\renewcommand{\thetable}{S.\arabic{table}}
\setcounter{figure}{0}\renewcommand{\thefigure}{S.\arabic{figure}}
\numberwithin{equation}{section}
\numberwithin{theorem}{section}
\numberwithin{lemma}{section}
\numberwithin{proposition}{section}
\numberwithin{corollary}{section}
\numberwithin{assumption}{section}
\numberwithin{definition}{section}
\numberwithin{remark}{section}

\section{Hard-Simplex SDID Map and Numerical Directional Validation}
\label{app:first-stage}

This appendix specifies the group-level hard-simplex SDID estimator and the
ACS replicate construction used in the Medicaid analysis.  Its input is the
panel of design-weighted state--year uninsured rates.  We work directly with
that panel: the reported analysis neither residualizes the outcome nor uses a
Riesz or cross-fitted nuisance correction.

\subsection{Group-level estimator}

Let
\[
Y=(Y_{kt}):
k\in\mathcal T\cup\mathcal C,\quad
t\in\mathcal T_0\cup\mathcal T_1
\]
be the group--time panel, with \(T_0=|\mathcal T_0|\) and
\(T_1=|\mathcal T_1|\).  For a row vector \(y_{k,0}\) over the pre-period,
write
\[
\bar Y_{k,0}=T_0^{-1}\sum_{t\in\mathcal T_0}Y_{kt},
\qquad
\bar Y_{k,1}=T_1^{-1}\sum_{t\in\mathcal T_1}Y_{kt}.
\]

For treated group \(g\), define
\[
A_C(Y)
=
\left(
Y_{ht}-\bar Y_{h,0}
\right)_{t\in\mathcal T_0,\ h\in\mathcal C}
\in\mathbb R^{T_0\times G_0},
\]
and
\[
b_g(Y)
=
\left(
Y_{gt}-\bar Y_{g,0}
\right)_{t\in\mathcal T_0}
\in\mathbb R^{T_0}.
\]
The donor weights solve
\begin{equation}
\widehat\omega_g(Y)
=
\argmin_{\omega\in\Delta_{G_0}}
\left\{
\|A_C(Y)\omega-b_g(Y)\|_2^2
+
\rho_{\omega}(Y)\|\omega\|_2^2
\right\},
\label{eq:donor-program}
\end{equation}
where
\[
\Delta_m
=
\{x\in\mathbb R^m:x\ge0,\ \one^\top x=1\}.
\]

For the common time weights, let
\[
M_C(Y)
=
\left(
Y_{ht}
-
G_0^{-1}\sum_{j\in\mathcal C}Y_{jt}
\right)_{h\in\mathcal C,\ t\in\mathcal T_0}
\in\mathbb R^{G_0\times T_0},
\]
and
\[
y_C(Y)
=
\left(
\bar Y_{h,1}
-
G_0^{-1}\sum_{j\in\mathcal C}\bar Y_{j,1}
\right)_{h\in\mathcal C}.
\]
The time weights solve
\begin{equation}
\widehat\lambda(Y)
=
\argmin_{\lambda\in\Delta_{T_0}}
\left\{
\|M_C(Y)\lambda-y_C(Y)\|_2^2
+
\rho_{\lambda}(Y)\|\lambda\|_2^2
\right\}.
\label{eq:time-program}
\end{equation}

Define the signed time coefficient
\[
c_t(\lambda)
=
\begin{cases}
-\lambda_t,&t\in\mathcal T_0,\\[2mm]
T_1^{-1},&t\in\mathcal T_1,
\end{cases}
\]
and
\[
d_k(Y,\lambda)
=
\sum_t c_t(\lambda)Y_{kt}.
\]
The fitted group effect is
\begin{equation}
\widehat\tau_g(Y)
=
d_g\{Y,\widehat\lambda(Y)\}
-
\widehat\omega_g(Y)^\top
d_{\mathcal C}\{Y,\widehat\lambda(Y)\}.
\label{eq:reported-sdid-map}
\end{equation}

We fix the ridge functions \(\rho_\omega(Y)\) and \(\rho_\lambda(Y)\), the
pre- and post-treatment periods, and the number of projected-gradient
iterations before generating directional draws.  None of these choices is
re-selected within a draw.

\subsection{Signed-cell representation}

For \(k\in\{g\}\cup\mathcal C\), define
\[
a_{gk}(\omega)
=
\begin{cases}
1,&k=g,\\
-\omega_k,&k\in\mathcal C.
\end{cases}
\]

\begin{lemma}[Signed-cell representation]
\label{lem:signed-cell}
For every \((\omega,\lambda)\in
\Delta_{G_0}\times\Delta_{T_0}\),
\[
\psi_g(\omega,\lambda;Y)
=
\sum_{k\in\{g\}\cup\mathcal C}
\sum_{t\in\mathcal T_0\cup\mathcal T_1}
a_{gk}(\omega)c_t(\lambda)Y_{kt}.
\]
If \(\varepsilon\) stacks the group--time estimation errors and
\(\Omega_Y=\operatorname{Var}(\varepsilon\mid\mathcal F)\), then for fixed
fitted weights
\[
\widehat\tau_g-\psi_g
=
d_g^\top\varepsilon,
\qquad
\Sigma^\tau_{gg'}
=
d_g^\top\Omega_Yd_{g'}.
\]
With common time weights the donor part can be written
\[
e=u-Dv,
\]
where row \(g\) of \(D\) is the donor loading induced by
\(\widehat\omega_g\) and the common signed time contrast.
\end{lemma}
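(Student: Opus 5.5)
The proof is an expansion of the reported map, carried out in three steps: an algebraic identity for $\psi_g$, linearity of the map in $Y$ once the weights are fixed, and a split of the donor rows.

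\emph{Step 1: the identity for $\psi_g$.} Substitute the definition of $d_k(Y,\lambda)=\sum_t c_t(\lambda)Y_{kt}$ into the expression \eqref{eq:reported-sdid-map}, which gives
\[
\psi_g(\omega,\lambda;Y)=d_g(Y,\lambda)-\omega^\top d_{\mathcal C}(Y,\lambda).
\]
This equals
\[
\sum_t c_t(\lambda)Y_{gt}-\sum_{k\in\mathcal C}\omega_k\sum_t c_t(\lambda)Y_{kt}.
\]
With $a_{gg}=1$ and $a_{gk}=-\omega_k$, this is exactly the double sum $\sum_{k}\sum_t a_{gk}(\omega)c_t(\lambda)Y_{kt}$. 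The identity holds for every $(\omega,\lambda)$ in the product of simplices, because no optimality of the weights is used.

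\emph{Step 2: linearity and the covariance formula.} Hold the fitted weights $(\widehat\omega_g,\widehat\lambda)$ fixed, so they are treated as $\mathcal F$-measurable. Write $Y=\psi$-target panel $+\,\varepsilon$, where $\psi_g$ is evaluated at the target panel with the same fixed weights. The map $Y\mapsto\psi_g(\widehat\omega_g,\widehat\lambda;Y)$ is linear in $Y$. Its coefficient vector is the signed-cell vector $d_g$, with entries $(d_g)_{kt}=a_{gk}(\widehat\omega_g)c_t(\widehat\lambda)$ and zeros for treated groups other than $g$. Therefore
\[
\widehat\tau_g-\psi_g=d_g^\top\varepsilon.
\]
Taking the conditional covariance of $d_g^\top\varepsilon$ and $d_{g'}^\top\varepsilon$ gives $\Sigma^\tau_{gg'}=d_g^\top\Omega_Y d_{g'}$.

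Only linearity of a conditional covariance is used here. No independence across cells is assumed, so every off-diagonal term of $\Omega_Y$ is retained. This covers donor cells shared by $d_g$ and $d_{g'}$, and cross-year couplings.

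\emph{Step 3: the own/donor decomposition.} Partition $\varepsilon$ into treated-group cells $\varepsilon_{\mathcal T}$ and donor cells $\varepsilon_{\mathcal C}$. The own part is $u_g=\sum_t c_t(\widehat\lambda)\varepsilon_{gt}$. Because $\widehat\lambda$ is common to all groups, the donor part of $d_g^\top\varepsilon$ is $-\sum_{k\in\mathcal C}\widehat\omega_{gk}\,v_k$, where
\[
v_k=\sum_t c_t(\widehat\lambda)\varepsilon_{kt}
\]
does not depend on $g$. Stacking over $g$ gives $e=u-Dv$ with $D_{gk}=\widehat\omega_{gk}$. Equivalently, on the cell scale one can write $D=\widehat W\otimes c(\widehat\lambda)^\top$ if $v$ is taken to be the full donor error vector; this is the loading induced by $\widehat\omega_g$ and the common signed time contrast.

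The step needing the most care is the fixed-weight qualifier. The statement concerns the error with weights held at their fitted values, so I will state it explicitly as conditional on those weights and not claim it for the re-estimated map. The re-estimation contribution is what Proposition~\ref{prop:hard-simplex-interface}(ii)--(iii) handles separately.
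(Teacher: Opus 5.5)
Your proposal is correct and follows the paper's proof: you expand $a_{gk}(\omega)c_t(\lambda)$ over the treated and donor rows to recover the fitted contrast, obtain the error and covariance formulas by linearity with the weights held fixed, and group treated-row and donor-row contributions into $u$ and $Dv$. Your version also makes explicit what the paper leaves implicit: $v_k=\sum_t c_t(\widehat\lambda)\varepsilon_{kt}$ and $D_{gk}=\widehat\omega_{gk}$, with the common $\widehat\lambda$ being what makes $v$ independent of $g$.
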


\begin{proof}
Expanding \(a_{gk}(\omega)c_t(\lambda)\) over the treated row, donor rows,
pre-period, and post-period yields
\[
\left(
\bar Y_{g,1}-\sum_{t\in\mathcal T_0}\lambda_tY_{gt}
\right)
-
\sum_{h\in\mathcal C}\omega_h
\left(
\bar Y_{h,1}-\sum_{t\in\mathcal T_0}\lambda_tY_{ht}
\right),
\]
which is the fitted contrast.  The covariance formula follows by linearity.
The final representation groups the treated-row contributions into \(u\) and
the donor-row contributions into \(Dv\).
\end{proof}

\subsection{Directional and numerical propagation}

The complete hard-simplex map remains directionally differentiable when a
weight program lies on a degenerate face.  Appendix~\ref{app:proofs} gives a
multiplier-free critical-cone derivative and proves Hadamard directional
differentiability.  For primary inference, however, we do not evaluate that
derivative at the random estimated face.  We use an operator-normalized
replicate direction, the rate-scale logit chart in
Lemma~\ref{lem:logit-chart}, and a numerical step larger than the primitive
sampling scale.  Section~\ref{app:moving-base} proves the moving-base result;
Section~\ref{app:gap-certificate} bounds the error from the numerical QP
solutions.

\subsection{ACS replicate-tangent implementation}

Let
\[
\Delta_r
=
\widehat\vartheta^{(r)}-\widehat\vartheta,
\qquad r=1,\ldots,80,
\]
be the full-estimate-centered ACS replicate perturbations, and define
\[
B_{80}
=
\sqrt{\frac4{80}}
\begin{bmatrix}
\Delta_1&\cdots&\Delta_{80}
\end{bmatrix},
\qquad
\widehat s=\|B_{80}\|_{\mathrm{op}},
\qquad
\widetilde B_{80}=B_{80}/\widehat s.
\]
For \(\xi_b\sim N(0,I_{80})\), the normalized primitive direction and
numerical step are
\[
\widetilde G_b^*=\widetilde B_{80}\xi_b,
\qquad
\delta=\MedStepCN\,\widehat s^{\MedStepGammaN}.
\]
The actual-scale effect draw is
\[
U_b^*
=
\widehat s
\frac{
\Phi\{\mathcal P_{\delta,\widetilde G_b^*}(\widehat\vartheta)\}
-
\Phi(\widehat\vartheta)
}{\delta},
\]
where \(\mathcal P\) is the rate-scale logit chart.  Every donor- and
time-weight program, including its data-dependent ridge, is re-estimated in
every draw.

Before generating the draws, we fix the policy design, treatment periods,
ridge rules, numerical step, target family, and region metric.  The resulting
joint effect draws are used for the studentized maximum over the four linear
targets and the projected-curve grid.  They also determine the centered
effect-vector region, its quadratic images, and the raw sampling-center
homogeneity statistic.  We report the exact critical-cone derivative, the
official SDR covariance, the household A/B calculation, support changes, and
KKT residuals as checks; none selects a different inferential branch.

The following assumption is limited to the directions, variances, traces, and
region quantities used in the reported analysis.  It does not require
consistent estimation of an unrestricted primitive covariance matrix.

\begin{assumption}[Target-relevant ACS tangent and cross-year coupling]
\label{as:replicate-tangent-supp}
\label{as:acs-tangent-coupling}
For each annual ACS file, the released replicate frame consistently
reproduces the finite collection of within-year primitive directions used by
the reported targets.  Cross-year dependence is represented by the
prespecified coupling matrix \(R_T\) in
Proposition~\ref{prop:cross-year-coupling}.  Conditional on the released
survey-design information,
\[
d_{\mathrm{BL}}
\left[
\mathcal L^*
\{G_n^*(R_T)\},
\mathcal L(G_\vartheta\mid\mathcal F_n)
\right]
\to_p0
\]
for the finite target-relevant class entering the hard-simplex derivative,
the reported linear targets, the centered-region norm, and the reported
quadratic traces.  No unrestricted primitive covariance consistency is
assumed.
\end{assumption}

\begin{proposition}[Cross-year coupling of annual replicate frames]
\label{prop:cross-year-coupling}
For year \(t=1,\ldots,T\), let
\(B_{t,n}\in\mathbb R^{d_t\times R}\), \(R=80\), denote the
full-estimate-centered annual ACS replicate frame.  Let
\[
\Xi
=
(\xi_1^\top,\ldots,\xi_T^\top)^\top,
\qquad
E(\xi_t\xi_s^\top)
=
\rho_{ts}I_R,
\]
where
\[
R_T=(\rho_{ts})_{t,s=1}^T\succeq0,
\qquad
\rho_{tt}=1.
\]
Define the stacked Gaussian replicate direction
\[
G_n^*(R_T)
=
\begin{pmatrix}
B_{1,n}\xi_1\\
\vdots\\
B_{T,n}\xi_T
\end{pmatrix}.
\]
Then its \((t,s)\) covariance block is
\[
\operatorname{Cov}^*
\left\{
B_{t,n}\xi_t,
B_{s,n}\xi_s
\right\}
=
\rho_{ts}B_{t,n}B_{s,n}^\top.
\]
The common-index construction corresponds to
\[
R_T=\mathbf1_T\mathbf1_T^\top,
\]
whereas year-block independence corresponds to
\[
R_T=I_T.
\]
Marginal annual SDR identities identify the diagonal blocks
\(B_{t,n}B_{t,n}^\top\), but do not by themselves identify the
off-diagonal coupling coefficients \(\rho_{ts}\).  Validity of a chosen
\(R_T\) therefore requires either a joint design identity or an explicit
cross-year sampling assumption.
\end{proposition}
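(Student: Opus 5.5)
The statement is Proposition~\ref{prop:cross-year-coupling}, and its covariance claim follows from bilinearity of the conditional covariance. The frames \(B_{t,n}\) are fixed given the released design information, and the only randomness under \(P^*\) is \(\Xi\), which is mean zero with \(E(\xi_t\xi_s^\top)=\rho_{ts}I_R\). Hence
\[
\operatorname{Cov}^*\{B_{t,n}\xi_t,B_{s,n}\xi_s\}
=B_{t,n}E(\xi_t\xi_s^\top)B_{s,n}^\top
=\rho_{ts}B_{t,n}B_{s,n}^\top .
\]
Before this step I will check that the law of \(\Xi\) is well defined. Its covariance matrix is \(R_T\otimes I_R\), which is positive semidefinite because \(R_T\succeq0\). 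A Gaussian \(\Xi\) with this covariance therefore exists, and the stacked direction \(G_n^*(R_T)\) is a legitimate Gaussian draw whose full covariance has \((t,s)\) block \(\rho_{ts}B_{t,n}B_{s,n}^\top\).

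The two special cases I will verify directly. If \(R_T=\mathbf1_T\mathbf1_T^\top\), then \(\rho_{ts}=1\) for all \(t,s\). Since \(\operatorname{Var}(\xi_t-\xi_s)=(2-2\rho_{ts})I_R=0\), the vectors satisfy \(\xi_t=\xi_s\) almost surely. This is exactly the common-index construction, in which one \(\xi\) multiplies every annual frame. If \(R_T=I_T\), the off-diagonal blocks vanish. Because \(\Xi\) is jointly Gaussian, uncorrelated blocks are independent, which gives year-block independence.

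The remaining claim is that the marginal annual SDR identities do not identify the \(\rho_{ts}\). I will argue by exhibiting observational equivalence.
\begin{itemize}
\item The marginal identities involve only the diagonal blocks \(B_{t,n}B_{t,n}^\top\), which reproduce the within-year replicate variance of each annual estimator.
\item These blocks are the same for every admissible \(R_T\), since every admissible \(R_T\) has \(\rho_{tt}=1\).
\item The two admissible choices \(R_T=\mathbf1\mathbf1^\top\) and \(R_T=I_T\) give identical diagonal blocks. Their off-diagonal blocks differ whenever some \(B_{t,n}B_{s,n}^\top\neq0\).
\end{itemize}
So no function of the marginal identities alone can determine \(\rho_{ts}\). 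Pinning down \(\rho_{ts}\) requires an additional joint design identity linking replicate indices across years, or an explicit cross-year sampling assumption, which is the stated conclusion.

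The main obstacle is not technical. It is making the non-identification statement precise: I will phrase it as observational equivalence of the diagonal-block information across two admissible \(R_T\). The only genuine condition is that some cross-year product \(B_{t,n}B_{s,n}^\top\) is nonzero. If every such product were zero, all choices of \(R_T\) would give the same covariance and the non-identification claim would hold only trivially.
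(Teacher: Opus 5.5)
Your proposal is correct and follows the paper's proof: the covariance block comes from $B_{t,n}E^*(\xi_t\xi_s^\top)B_{s,n}^\top=\rho_{ts}B_{t,n}B_{s,n}^\top$, the two special cases follow by substitution, and non-identification follows because the marginal identities involve only the blocks with $t=s$. Your extra checks are sound refinements of that same argument rather than a different route: existence of the Gaussian law via $R_T\otimes I_R\succeq0$, $\xi_t=\xi_s$ almost surely when $R_T=\mathbf1\mathbf1^\top$, independence under $R_T=I_T$ by joint Gaussianity, and the observational-equivalence framing with its caveat that some $B_{t,n}B_{s,n}^\top\ne0$.
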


\begin{proof}
For every \(t,s\),
\[
\operatorname{Cov}^*
(B_{t,n}\xi_t,B_{s,n}\xi_s)
=
B_{t,n}
E^*(\xi_t\xi_s^\top)
B_{s,n}^\top
=
\rho_{ts}B_{t,n}B_{s,n}^\top.
\]
The two special cases follow by substitution.  Since the marginal
identities use only \(t=s\), they contain no restriction on
\(\rho_{ts}\) for \(t\ne s\).
\end{proof}

\section{Proofs and Auxiliary Results}
\label{app:proofs}

\subsection{Proof of Main Theorem 1}

\begin{proof}[Proof of Theorem~\ref{thm:fixedset-transfer}]
For part~(i), Assumption~\ref{as:joint-interface} gives
\[
a_nL_n(\widehat\tau_n-\mu_n)
=
L_nZ_n+a_nL_nr_n.
\]
The second term is \(o_p(1)\), uniformly over the reported finite collection,
and the first converges to \(LZ\).  The same conditional continuous-mapping
argument applies to \(L_nZ_n^*\).  Target-relevant covariance consistency then
gives
\[
\widehat s_{j,n}^2
\to_p
\operatorname{Var}(\ell_j^\top Z\mid\mathcal F_\infty).
\]
Joint convergence of the finite studentized vector and continuity of its
maximum at the reported quantile prove simultaneous coverage.  A fixed curve
grid is another finite collection of linear coefficients.

For part~(ii), put \(\Delta_n=\widehat\tau_n-\mu_n\).  The exact identity
\[
\widehat Q_{H,n}^{\mathrm{AN}}-Q_H(\mu_n)
=
\frac2{G_1}\mu_n^\top H\Delta_n
+
\frac1{G_1}
\left\{
\Delta_n^\top H\Delta_n
-
\operatorname{tr}(H\widehat\Sigma_n^\tau)
\right\}
\]
follows by expansion.  Since
\(a_n\Delta_n=Z_n+o_p(1)\), multiplication by \(a_n\) leaves only
\(2G_1^{-1}\mu_n^\top HZ_n\) in the regular regime.  Cramér--Wold gives the
joint limit over \(H_1,\ldots,H_K\), and the conditional Gaussian mean and
covariance follow by direct calculation.

If \(a_nH\mu_n\to h\) and \(H\) is idempotent, multiply the same identity by
\(a_n^2\).  The four terms converge to
\[
G_1^{-1}\|h\|^2,
\qquad
2G_1^{-1}h^\top Z,
\qquad
G_1^{-1}Z^\top HZ,
\qquad
-G_1^{-1}\operatorname{tr}(H\Omega),
\]
respectively.  The local-boundary remainder condition completes the proof.  If
\(H\tau_n=0\), then \(a_nH\mu_n=a_nHb_n\), which gives the stated additional
condition for a causal boundary interpretation.

For part~(iii),
\[
a_n(\widehat\eta_n-\eta_n)=MZ_n+o_p(1),
\]
and the conditional bootstrap reproduces this law.  Convergence of \(S_n\) and
continuity of the limiting norm imply
\[
P\{\eta_n\in\mathcal C_{\eta,n}\}\to1-\alpha.
\]
On this event the true pair \(\{V(\eta_n),E(\eta_n)\}\) belongs to the image
set.  If \(\underline V_n>0\), the ratio map is continuous on the region and
the same set-inclusion argument proves joint coverage of
\((V_\mu,V_{\mathrm{expl},\mu},R_\mu^2)\).  If the denominator margin is zero,
\(0\preceq H_Z\preceq M\) implies that every defined ratio lies in \([0,1]\).

For part~(iv), the null \(M\mu_n=0\) gives
\[
a_nM\widehat\tau_n=MZ_n+o_p(1).
\]
The observed raw quadratic and its conditional bootstrap analogue therefore
converge to the same law \(G_1^{-1}Z^\top MZ\).  Continuity at the critical
quantile proves the ideal test's size statement.  The plus-one calculation is
the finite-draw implementation of that conditional test.
\end{proof}

\subsection{Proof of Main Theorem 2}

\begin{proof}[Proof of Theorem~\ref{thm:donor-regimes}]
Premultiplying the maintained decomposition by \(\ell_n^\top\) gives
\[
\ell_n^\top(\widehat\tau_n-\mu_n)
=
a_{T,n}^{-1}\ell_n^\top Z_{T,n}
-
a_{D,n}^{-1}(D_n^\top\ell_n)^\top Z_{D,n}
+
\ell_n^\top r_n.
\]
Part~(i) follows immediately.  Under the assumptions of part~(ii), the first
and third terms are \(o_p(1)\), whereas the second converges to
\(-a_D^{-1}d^\top Z_D\).  Its stated nondegeneracy rules out convergence in
probability to zero.  Multiplication by \(a_{T,n}\) and joint convergence give
part~(iii).

For part~(iv),
\[
\ell^\top\Sigma_D\ell
=
\|\Omega_D^{1/2}D^\top\ell\|^2
=
E\{(\ell^\top DZ_D)^2\mid\mathcal F_\infty\}.
\]
The three zero conditions are therefore equivalent.  Their solution set is a
linear subspace, and every subspace on which the donor component vanishes is
contained in it, so it is maximal.  The identity
\(\ker(A)^\perp=\operatorname{range}(A^\top)\), with
\(A=\Omega_D^{1/2}D^\top\), gives
\(\mathcal E_D^\perp=\mathcal S_D\).

Part~(v) follows from
\[
D
=
G_1^{-1}\mathbf1\mathbf1^\top D+MD
=
\mathbf1\bar d^\top+D_c
\]
and premultiplication by the equal-weight or a centered coefficient.

For part~(vi), expand \(Q_H(\mu-W_D)\) and use
\[
E(W_D^\top HW_D\mid\mathcal F_\infty)
=
\operatorname{tr}(H\Sigma_D).
\]
If \(H\Sigma_DH=0\), then
\[
E(\|HW_D\|^2\mid\mathcal F_\infty)
=
\operatorname{tr}(H\Sigma_DH)=0,
\]
so \(HW_D=0\) conditionally almost surely and the complete donor polynomial
vanishes.  Conversely, if the polynomial vanishes for every \(\mu\), comparing
its values at \(\mu\) and \(\mu+a\) gives
\(a^\top HW_D=0\) almost surely for every deterministic \(a\).  Hence
\(HW_D=0\) and
\[
H\Sigma_DH
=
E\{(HW_D)(HW_D)^\top\mid\mathcal F_\infty\}=0.
\]
For positive-semidefinite \(H\) and \(\Sigma_D\), this is equivalent to
\(\Sigma_D^{1/2}H\Sigma_D^{1/2}=0\).  The cross-covariance of the linear and
centered-quadratic terms is a linear combination of third moments and vanishes
under the stated symmetry condition.  The Gaussian quadratic-form variance
formula gives the last display.
\end{proof}

\subsection{The metric of the centered effect-vector region}\label{app:region-metric}
We obtain each reported quadratic summary by optimizing a quadratic functional
over
\[
\mathcal E=\{v\in\mathrm{range}(M):(v-\widehat\eta)\tp A(v-\widehat\eta)\le c^2\},
\qquad
A=M\diag(\widehat s^{-2})M+\rho M,
\]
where $\widehat s_j$ is the coordinatewise standard deviation of the centered
directional draws, $c$ is the $95$th percentile of their studentized norms, and
$\rho$ is a ridge.  We report the dimension, regularization sensitivity, and
radius calibration of this metric below.

\emph{Dimension and shape.} The region lies in
$\mathrm{range}(M)$, which has dimension \RegionDim.  Because the metric is
diagonal in those coordinates, it does not use correlations among the draws.
Those correlations are strong.  The leading eigenvalue of the centered draw
covariance accounts for \DrawTopShare\ of the trace, and the first three
account for \DrawTopThreeShare.  The resulting participation ratio
$\tr(\widehat\Sigma_c)/\lambda_{\max}(\widehat\Sigma_c)$ is
\DrawPartRank, while the entropy effective rank is \DrawEntropyRank, compared
with \RegionDim\ available directions.  After studentization, the analogous
figures are \StudPartRank\ and \StudEntropyRank.  In several low-variance
directions, the diagonal-standardized region is wider than the empirical draw
support; this contributes to conservative projected sets.

\emph{Full-covariance alternative.} We also considered a metric based on the
full centered draw covariance.  It is singular on the reported subspace and
therefore requires regularization; the resulting endpoints depend on that
choice:

\begin{center}\footnotesize
\begin{tabular}{lcc}
\toprule
Metric and its regularization & Radius & $V$ set ($\times10^{-4}$)\\
\midrule
Diagonal, ridge $1\times10^{-10}$ & $6.54$ & $[18.9,\,92.9]$\\
Full covariance, ridge $1\times10^{-6}$ & $5.24$ & $[29.5,\,122.4]$\\
Full covariance, ridge $1\times10^{-4}$ & $5.24$ & $[29.5,\,122.3]$\\
Full covariance, ridge $1\times10^{-2}$ & $5.07$ & $[29.6,\,117.8]$\\
Full covariance, ridge $1\times10^{-1}$ & $4.41$ & $[29.3,\,102.2]$\\
\bottomrule
\end{tabular}

\end{center}

Across four orders of magnitude in that regularization, the upper endpoint of
the $V$ set moves by \FullMetricSpread.  The full-covariance sets are not
uniformly shorter.  We retain the prespecified diagonal metric because its
reported endpoints do not require an additional full-covariance tuning choice.

\emph{Ridge.} In the reported metric $\rho$ has a
single purpose, to make $A$ definite on $\mathrm{range}(M)$ so that the
trust-region solve is well posed. Its value is \RegionRidge, and varying it
from zero to $10^{-6}$ moves every reported endpoint by at most
\RegionRidgeMove\ in relative terms. The conditioning of the resulting metric
is \RegionCond\ in the generalized eigenvalues returned by the certificate.

\emph{Calibration of the radius.} The radius is the realized quantile
\RegionRadius, not the Gaussian reference $\{\chi^2_{\RegionDimN,0.95}\}^{1/2}=$
\RegionGaussRadius, a ratio of \RegionRadiusRatio. The reported radius is calibrated from the directional draws rather than from
a Gaussian effect-vector reference.  The ratio records the numerical difference
between the two radii in this metric.

\subsection{Robust subspace diagnostics}\label{app:robust-subspace}
We use numerical rank, distances to the donor-robust subspace, and
replicate-frame perturbations to describe the prespecified linear targets.
These diagnostics neither select a target nor determine an inferential branch
or hypothesis.

\emph{Numerical rank.} $\mathcal E_D=\ker\Sigma_D$ is defined by a rank, and in
finite samples small eigenvalues make a single declared rank unsafe. Taking
$\widehat{\mathcal E}_{D,\rho}$ to be spanned by the eigenvectors with
$\widehat\lambda_j\le\rho\widehat\lambda_{\max}$ for
$\rho\in\{10^{-8},10^{-6},10^{-4}\}$ leaves the reported dimension and every
target's relative distance unchanged to the precision reported in
Table~\ref{tab:med-linear-route1}; the eigenvalue gap in this design is wide enough
that the three thresholds agree. The diagnostic does not select any
primary inferential procedure.

\emph{Closest donor-robust target.} The nearest element of
$\mathcal E_D$ to a prespecified coefficient $\ell$ is
$P_{\mathcal E_D}\ell$; Table~\ref{tab:med-linear-route1} reports the
residual norm $\|P_{\mathcal S_D}\ell\|$.  Replacing $\ell$ by
$P_{\mathcal E_D}\ell$ would create a data-dependent target with zero estimated donor exposure, but
it would no longer be the stated mean or moderator slope.  We therefore retain
the prespecified coefficient and use the projection only to measure its
distance from the donor-robust subspace.

\emph{Replicate stability.} The exposures are functions of the replicate
covariance $\Omega_D$.  We therefore recompute them after leave-one-replicate
and half-frame perturbations.
Recomputing $\mathfrak g_D$ on the same replicate design leaves the ordering of
the four targets unchanged and the mean-to-slope ratio within rounding of the
value reported in the main text.

\subsection{Bias-aware causal confidence region}\label{app:bias-aware}
\begin{theorem}[Bias-aware fixed-set confidence region]
\label{thm:bias-aware-region}
Let
\[
\mu_n=\tau_n+b_n
\]
and suppose
\[
\liminf_{n\to\infty}
P\{\mu_n\in C_{\mu,n}\}
\ge1-\alpha.
\]
Let \(\mathcal B_n\) be a deterministic or data-conditioned mismatch set.

If
\[
b_n\in\mathcal B_n
\]
under the maintained sensitivity restriction, define
\[
C_{\tau,n}
=
C_{\mu,n}\oplus(-\mathcal B_n)
=
\{v-b:v\in C_{\mu,n},\ b\in\mathcal B_n\}.
\]
Then
\[
\liminf_{n\to\infty}
P\{\tau_n\in C_{\tau,n}\}
\ge1-\alpha.
\]

More generally, if
\[
\liminf_{n\to\infty}P\{b_n\in\mathcal B_n\}\ge1-\eta,
\]
then
\[
\liminf_{n\to\infty}
P\{\tau_n\in C_{\tau,n}\}
\ge1-\alpha-\eta.
\]

For any deterministic functional or set-valued reporting map \(T\),
\[
T(C_{\tau,n})
=
\bigcup_{v\in C_{\tau,n}}T(v)
\]
inherits the same coverage statement.
\end{theorem}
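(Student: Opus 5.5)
The argument is a set-inclusion (event-containment) argument followed by a union bound; no distributional calculation is needed beyond the maintained coverage of \(C_{\mu,n}\).

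First, fix \(n\) and consider the event
\[
A_n=\{\mu_n\in C_{\mu,n}\}\cap\{b_n\in\mathcal B_n\}.
\]
On \(A_n\), take \(v=\mu_n\in C_{\mu,n}\) and \(b=b_n\in\mathcal B_n\). Then
\[
\tau_n=\mu_n-b_n=v-b\in C_{\mu,n}\oplus(-\mathcal B_n)=C_{\tau,n}.
\]
Hence \(A_n\subseteq\{\tau_n\in C_{\tau,n}\}\). Under the maintained sensitivity restriction, \(b_n\in\mathcal B_n\) holds surely, so \(A_n=\{\mu_n\in C_{\mu,n}\}\). Taking \(\liminf\) of \(P\{\mu_n\in C_{\mu,n}\}\le P\{\tau_n\in C_{\tau,n}\}\) gives the first claim.

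For the probabilistic version, Bonferroni gives
\[
P(A_n)\ge P\{\mu_n\in C_{\mu,n}\}-P\{b_n\notin\mathcal B_n\}.
\]
Taking \(\liminf\) and using \(\liminf(x_n-y_n)\ge\liminf x_n-\limsup y_n\), together with \(\limsup P\{b_n\notin\mathcal B_n\}\le\eta\), yields the bound \(1-\alpha-\eta\).

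For the reporting map, \(\tau_n\in C_{\tau,n}\) implies \(T(\tau_n)\subseteq\bigcup_{v\in C_{\tau,n}}T(v)\), with \(T(\tau_n)\in T(C_{\tau,n})\) for a pointwise functional. So \(\{\tau_n\in C_{\tau,n}\}\) is contained in the coverage event for \(T(\tau_n)\), and both liminf bounds transfer directly.

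The main subtlety is measurability rather than mathematics. \(C_{\tau,n}\) is a Minkowski sum of a random set and a possibly data-conditioned set, so the coverage events may not be measurable. I would state the argument with inner probabilities \(P_*\), or assume that the sets are closed random sets (for example, \(C_{\mu,n}\) compact and \(\mathcal B_n\) a closed box as in \(\mathcal B(\chi)\)), so that their Minkowski sum is a closed random set and the events are measurable. The containment argument is unchanged under either convention, because monotonicity and the union bound hold for inner probabilities as well.
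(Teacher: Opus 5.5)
Your proposal is correct and follows the paper's proof: containment of the event \(\{\mu_n\in C_{\mu,n},\,b_n\in\mathcal B_n\}\) in \(\{\tau_n\in C_{\tau,n}\}\), then the union bound for the \(1-\alpha-\eta\) version, then image inclusion for the reporting map \(T\). The measurability caveat (inner probabilities or closed random sets) is a refinement the paper does not spell out, but it does not change the argument.
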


\begin{proof}
On the event
\(\{\mu_n\in C_{\mu,n},\,b_n\in\mathcal B_n\}\),
\[
\tau_n=\mu_n-b_n\in C_{\mu,n}\oplus(-\mathcal B_n).
\]
The first claim follows immediately.  The second follows from the union
bound.  Finally, \(\tau_n\in C_{\tau,n}\) implies
\(T(\tau_n)\in T(C_{\tau,n})\), without requiring continuity of \(T\).
\end{proof}

\subsection{Directional derivative of the hard-simplex programs}\label{app:qp-derivative}
Both the donor and time weights solve strongly convex quadratic programs on a
fixed polyhedron.  Their solutions are directionally differentiable without a
linear independence constraint qualification or a unique multiplier.  The
critical cone describes the first-order behavior, including at degenerate
faces.

\begin{lemma}[Uniform tangent expansion of the rate-scale logit chart]
\label{lem:logit-chart}
Let \(K\subset(0,1)^d\) be compact and let \(H\subset\mathbb R^d\) be compact.
For
\[
\mathcal P_{\delta,h,k}(p)
=
\operatorname{expit}
\left\{
\operatorname{logit}(p_k)
+
\delta\frac{h_k}{p_k(1-p_k)}
\right\},
\]
we have
\[
\sup_{p\in K,\ h\in H}
\left\|
\frac{\mathcal P_{\delta,h}(p)-p}{\delta}
-h
\right\|
\longrightarrow0
\qquad(\delta\to0).
\]
Moreover, there exists \(C_{K,H}<\infty\) such that
\[
\sup_{p\in K,\ h\in H}
\|
\mathcal P_{\delta,h}(p)-p-\delta h
\|
\le
C_{K,H}\delta^2
\]
for all sufficiently small \(|\delta|\).
\end{lemma}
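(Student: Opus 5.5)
The lemma reduces coordinatewise to a scalar statement, since \(\mathcal P_{\delta,h}\) acts separately on each coordinate and \(K\) and \(H\) are compact products up to projection. I therefore fix a coordinate \(k\) and study the scalar map
\[
f(\delta;p,h)=\operatorname{expit}\left\{\operatorname{logit}(p)+\delta\frac{h}{p(1-p)}\right\},
\]
with \((p,h)\) ranging over the compact set \(K_k\times H_k\), where \(K_k\subset[\epsilon,1-\epsilon]\) for some \(\epsilon>0\) and \(|h|\le R\). The plan is a second-order Taylor expansion of \(f\) in \(\delta\) around \(\delta=0\), with a remainder bounded uniformly over \((p,h)\).

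\textbf{Step 1: value and derivative at \(\delta=0\).} Clearly \(f(0;p,h)=\operatorname{expit}\{\operatorname{logit}(p)\}=p\). Using \(\operatorname{expit}'(x)=\operatorname{expit}(x)\{1-\operatorname{expit}(x)\}\), the derivative at zero is
\[
\partial_\delta f(0;p,h)=p(1-p)\cdot\frac{h}{p(1-p)}=h.
\]
This is the claim already recorded in the main text that the chart has derivative \(h\) at zero.

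\textbf{Step 2: uniform bound on the second derivative.} We have
\[
\partial_\delta^2 f(\delta;p,h)=\operatorname{expit}''(x_\delta)\left\{\frac{h}{p(1-p)}\right\}^2,
\qquad
x_\delta=\operatorname{logit}(p)+\delta\frac{h}{p(1-p)}.
\]
The function \(\operatorname{expit}''=\operatorname{expit}(1-\operatorname{expit})(1-2\operatorname{expit})\) is bounded on all of \(\mathbb R\) by \(1/4\). Moreover \(|h|/\{p(1-p)\}\le R/\{\epsilon(1-\epsilon)\}\) uniformly. It follows that
\[
\sup_{\delta\in\mathbb R,\,p\in K_k,\,h\in H_k}\bigl|\partial_\delta^2 f\bigr|\le C_k<\infty.
\]
The bound does not even require \(\delta\) to be small. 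Taylor's theorem with Lagrange remainder then gives
\[
|f(\delta)-p-\delta h|\le \tfrac12 C_k\delta^2.
\]
Summing over the \(d\) coordinates, using \(\|x\|\le\sum_k|x_k|\), yields the second display of the lemma with
\[
C_{K,H}=\tfrac12\sum_k C_k .
\]

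\textbf{Step 3: the first display.} Dividing the second-order bound by \(|\delta|\) gives
\[
\sup\left\|\frac{\mathcal P_{\delta,h}(p)-p}{\delta}-h\right\|\le C_{K,H}|\delta|\to0 .
\]

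The only point needing care is the uniformity, which rests on \(K\) staying away from the faces \(\{0,1\}\) so that \(p(1-p)\) is bounded below. Because \(\operatorname{expit}''\) is globally bounded, there is no need to control where \(x_\delta\) travels, so I do not expect a genuine obstacle. As a side remark, range preservation \(\mathcal P_{\delta,h}(p)\in(0,1)\) is immediate from the range of \(\operatorname{expit}\).
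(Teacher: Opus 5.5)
Your proposal is correct and follows essentially the same route as the paper: a coordinatewise second-order Taylor expansion of \(a\mapsto\operatorname{expit}\{\operatorname{logit}(p)+a/(p(1-p))\}\), whose derivative at zero is one and whose second derivative is uniformly bounded because \(p(1-p)\) stays away from zero on \(K\), followed by summing over coordinates. The only difference is that you use the global bound \(|\operatorname{expit}''|\le 1/4\) instead of confining the arguments to a common compact interval for small \(\delta\), which makes the second display hold for every \(\delta\) rather than only for small \(|\delta|\).
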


\begin{proof}
For one coordinate, put
\[
g_p(a)
=
\operatorname{expit}
\left\{
\operatorname{logit}(p)+
\frac{a}{p(1-p)}
\right\}.
\]
Then
\[
g_p(0)=p,
\qquad
g_p'(0)=1.
\]
On a compact \(K\subset(0,1)\), \(p(1-p)\) is bounded away from zero.
For \(h\) in a compact set and \(\delta\) sufficiently small, the arguments of
\(g_p\) remain in a common compact interval.  Its second derivative is
therefore uniformly bounded.  Taylor's theorem gives
\[
g_p(\delta h)
=
p+\delta h+O(\delta^2h^2)
\]
uniformly in \(p\in K\) and \(h\in H\).  Applying the coordinatewise bound and
summing proves both statements.
\end{proof}

\begin{assumption}[Primitive scale and replicate tangent law]
\label{as:primitive-scale}
There is a deterministic sequence \(s_n\downarrow0\) such that
\[
\|\widehat\vartheta_n-\vartheta_0\|=O_p(s_n).
\]
Let \(\widehat B_n\in\mathbb R^{d\times R}\) be the actual-scale replicate
factor, where \(R\) is fixed, and put
\[
\widehat s_n=\|\widehat B_n\|_{\mathrm{op}}>0.
\]
For constants \(0<c_s<C_s<\infty\),
\[
P\left(
c_s
\le
\frac{\widehat s_n}{s_n}
\le
C_s
\right)
\longrightarrow1.
\]
For \(\xi\sim N(0,I_R)\), conditionally independent of the data,
\[
d_{\mathrm{BL}}
\left[
\mathcal L^*
\left(
s_n^{-1}\widehat B_n\xi
\right),
\mathcal L
\left(
G_\vartheta\mid\mathcal F_\infty
\right)
\right]
\to_p0
\]
on the finite tangent class needed by the reported procedure.
\end{assumption}

\begin{proposition}
[Hadamard directional derivative of a strongly convex polyhedral program]
\label{prop:polyhedral-qp-derivative}
Let \(\mathcal P\subset\mathbb R^m\) be a nonempty fixed polyhedron and define
\[
x(\vartheta)
=
\arg\min_{x\in\mathcal P}
\left\{
f_\vartheta(x)
=
\frac12x^\top H(\vartheta)x-q(\vartheta)^\top x
\right\}.
\]
Let
\[
\mathcal S
=
\operatorname{span}(\mathcal P-\mathcal P).
\]
Assume that \(H(\vartheta)\) is symmetric and, on a neighborhood of
\(\vartheta_0\),
\[
u^\top H(\vartheta)u
\ge
\underline\kappa\|u\|^2
\qquad
\text{for every }u\in\mathcal S
\]
for some \(\underline\kappa>0\).  Suppose \(H\) and \(q\) are Hadamard
directionally differentiable at \(\vartheta_0\).

Let
\[
x_0=x(\vartheta_0),
\qquad
g_0=H(\vartheta_0)x_0-q(\vartheta_0),
\]
and define the multiplier-free critical cone
\[
\mathcal K_0
=
T_{\mathcal P}(x_0)\cap g_0^\perp.
\]
Then the solution map \(x(\cdot)\) is Hadamard directionally differentiable at
\(\vartheta_0\).  For direction \(h\), its derivative is the unique solution
\[
x'_{\vartheta_0}(h)
=
\arg\min_{u\in\mathcal K_0}
\left[
\frac12u^\top H(\vartheta_0)u
+
\left\{
\dot H_{\vartheta_0}[h]x_0
-
\dot q_{\vartheta_0}[h]
\right\}^\top u
\right].
\]
The result does not require LICQ or uniqueness of a multiplier
representation.  Weakly active inequalities enter through
\(T_{\mathcal P}(x_0)\cap g_0^\perp\).
\end{proposition}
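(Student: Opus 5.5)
The plan follows the standard sensitivity analysis for strongly convex quadratic programs over polyhedra, in four steps.

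\emph{Step 1 (Lipschitz stability).} Write \(\vartheta_t=\vartheta_0+th_t\) with \(h_t\to h\), and set \(x_t=x(\vartheta_t)\). Uniform strong convexity on \(\mathcal S\) combined with the variational inequalities at \(x_t\) and \(x_0\) gives
\[
\underline\kappa\|x_t-x_0\|^2
\le
(x_t-x_0)^\top\{(H(\vartheta_t)-H(\vartheta_0))x_0-(q(\vartheta_t)-q(\vartheta_0))\}.
\]
Because \(x_t-x_0\in\mathcal S\) and \(H,q\) are Hadamard directionally differentiable (hence locally Lipschitz along such sequences), this yields \(\|x_t-x_0\|=O(t)\). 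The difference quotients \(u_t=(x_t-x_0)/t\) are therefore bounded, and it suffices to show that every cluster point equals the minimizer \(u^\star\) of the stated critical-cone program, which exists and is unique by strong convexity on the closed convex cone \(\mathcal K_0\subset\mathcal S\).

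\emph{Step 2 (cluster points lie in \(\mathcal K_0\)).} For a cluster point \(\bar u\), \(\bar u\in T_{\mathcal P}(x_0)\), since for a polyhedron the tangent cone is closed and equals the cone of feasible directions. Optimality of \(x_0\) gives \(g_0^\top\bar u\ge0\). For the reverse inequality, use optimality of \(x_t\) tested at \(x_0\):
\[
(H(\vartheta_t)x_t-q(\vartheta_t))^\top(x_0-x_t)\ge0.
\]
Dividing by \(t\) and letting \(t\to0\) gives \(g_0^\top\bar u\le0\). Hence \(\bar u\in g_0^\perp\), so \(\bar u\in\mathcal K_0\).

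\emph{Step 3 (\(\bar u\) solves the critical-cone program).} This is the main obstacle: it must be done without multipliers or LICQ, using only polyhedral geometry. Fix \(w\in\mathcal K_0\). Because \(\mathcal P\) is polyhedral, \(x_0+sw\in\mathcal P\) for small \(s\), and the face structure near \(x_0\) is locally conic: \(\mathcal P\cap B(x_0,\epsilon)=(x_0+T_{\mathcal P}(x_0))\cap B(x_0,\epsilon)\). Hence \(x_t+t(w-u_t)=x_0+tw\in\mathcal P\) for small \(t\), and more generally the reduction lemma gives \(x_t+s(w-u_t)\) feasible on the local cone. Optimality of \(x_t\) then gives
\[
(H(\vartheta_t)x_t-q(\vartheta_t))^\top(w-u_t)\ge0.
\]
Expand \(H(\vartheta_t)x_t-q(\vartheta_t)=g_0+t\{H(\vartheta_0)u_t+\dot H[h]x_0-\dot q[h]\}+o(t)\). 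The \(g_0\) term is handled as follows: \(g_0^\top w=0\), and \(g_0^\top u_t\ge0\) by optimality of \(x_0\). Since \(-g_0^\top u_t\le0\), the inequality survives division by \(t\) in the direction needed. Passing to the limit yields
\[
\{H(\vartheta_0)\bar u+\dot H[h]x_0-\dot q[h]\}^\top(w-\bar u)\ge0
\quad\text{for all }w\in\mathcal K_0,
\]
which is exactly the variational characterization of \(u^\star\). Uniqueness follows from strong convexity on \(\mathcal S\supset\mathcal K_0\).

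\emph{Step 4 (conclusion).} Since all cluster points coincide with \(u^\star\), the limit holds along every \(h_t\to h\), which is Hadamard directional differentiability. Weakly active constraints enter only through \(T_{\mathcal P}(x_0)\cap g_0^\perp\), and no multiplier was used. The points to check carefully are the local conic reduction of \(\mathcal P\) in Step 3 and the sign bookkeeping of the \(g_0^\top u_t/t\) term.
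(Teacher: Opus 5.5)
Your argument is correct and follows the paper's skeleton: the monotonicity bound $\|x_t-x_0\|=O(t)$, a cluster point in $\mathcal K_0$, optimality of that point for the limit program, and uniqueness. Steps 2--3, however, run on a different mechanism.

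The paper compares objective values. It sets $F_\nu(v)=\{f_{\vartheta_\nu}(x_0+t_\nu v)-f_{\vartheta_\nu}(x_0)\}/t_\nu^2=g_0^\top v/t_\nu+\tfrac12v^\top H_0v+k(h)^\top v+o(1)$. It gets $g_0^\top u=0$ by contradiction from $F_\nu(u_\nu)\le F_\nu(0)=0$, since a positive $g_0^\top u$ would make the first term diverge. It gets optimality from $F_\nu(u_\nu)\le F_\nu(v)$ for $v\in\mathcal K_0$, after dropping the nonnegative term $g_0^\top u_\nu/t_\nu$.

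You stay with first-order conditions instead:
\begin{itemize}
\item The variational inequality at $x_t$, tested at $x_0$, gives $g_0^\top\bar u\le0$ directly.
\item The same inequality, tested at $x_0+tw$, gives the limiting variational inequality on $\mathcal K_0$. Your sign bookkeeping with $g_0^\top w=0\le g_0^\top u_t$ is right.
\item Convexity of the limit quadratic on $\mathcal K_0\subset\mathcal S$ then turns that inequality into the argmin characterization.
\end{itemize}
The paper's value comparison yields the minimization property without this last conversion. Your route avoids the divergence argument and reuses the monotonicity from Step 1. It would also extend essentially unchanged to strongly monotone variational inequalities over polyhedra that do not come from a potential, such as those with nonsymmetric $H$.

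Two small corrections:
\begin{itemize}
\item The right-hand side of your Step 1 display should contain $(x_0-x_t)^\top$, not $(x_t-x_0)^\top$. This is harmless once you apply Cauchy--Schwarz.
\item The local conic reduction and the points $x_t+s(w-u_t)$ are not needed. You only use the single test point $x_0+tw$, which lies in $\mathcal P$ for fixed $w\in T_{\mathcal P}(x_0)$ and small $t$ by polyhedrality, exactly as in the paper.
\end{itemize}
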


\begin{proof}
Write
\[
H_0=H(\vartheta_0),
\qquad
q_0=q(\vartheta_0).
\]
The uniform positive definiteness on \(\mathcal S\) makes the objective
strictly convex on every feasible difference and hence gives a unique
minimizer.

Let \(t_\nu\downarrow0\), \(h_\nu\to h\), and put
\[
\vartheta_\nu=\vartheta_0+t_\nu h_\nu,
\qquad
x_\nu=x(\vartheta_\nu),
\qquad
u_\nu=\frac{x_\nu-x_0}{t_\nu}.
\]
The variational inequalities at \(x_\nu\) and \(x_0\) are
\[
\langle H(\vartheta_\nu)x_\nu-q(\vartheta_\nu),
x_0-x_\nu\rangle
\ge0
\]
and
\[
\langle H_0x_0-q_0,x_\nu-x_0\rangle
\ge0.
\]
Adding them and using strong monotonicity gives
\[
\underline\kappa\|x_\nu-x_0\|^2
\le
\left\|
\{H(\vartheta_\nu)-H_0\}x_0
-
\{q(\vartheta_\nu)-q_0\}
\right\|
\|x_\nu-x_0\|.
\]
Hadamard differentiability implies that the first norm is \(O(t_\nu)\).
Therefore
\[
\|x_\nu-x_0\|=O(t_\nu),
\qquad
\|u_\nu\|=O(1).
\]

Take a convergent subsequence, still denoted \(u_\nu\), with
\(u_\nu\to u\).  Since
\(x_0+t_\nu u_\nu\in\mathcal P\),
we have \(u\in T_{\mathcal P}(x_0)\).
Optimality of \(x_0\) implies
\[
g_0^\top v\ge0
\qquad
\text{for every }v\in T_{\mathcal P}(x_0).
\]

For feasible \(v\), define the scaled objective difference
\[
F_\nu(v)
=
\frac{
f_{\vartheta_\nu}(x_0+t_\nu v)
-
f_{\vartheta_\nu}(x_0)
}{t_\nu^2}.
\]
Uniformly on bounded sets,
\[
F_\nu(v)
=
\frac{g_0^\top v}{t_\nu}
+
\frac12v^\top H_0v
+
\left\{
\dot H_{\vartheta_0}[h]x_0
-
\dot q_{\vartheta_0}[h]
\right\}^\top v
+
o(1).
\]
Because \(u_\nu\) minimizes \(F_\nu\) and \(v=0\) is feasible,
\(F_\nu(u_\nu)\le0\).  If \(g_0^\top u>0\), the first term in the last
display diverges to \(+\infty\), a contradiction.  Hence
\[
u\in T_{\mathcal P}(x_0)\cap g_0^\perp
=
\mathcal K_0.
\]

Let \(v\in\mathcal K_0\).  Because \(\mathcal P\) is polyhedral,
\(x_0+t_\nu v\in\mathcal P\) for all sufficiently small \(t_\nu\).
The minimizing property gives
\[
F_\nu(u_\nu)\le F_\nu(v).
\]
Taking limits yields
\[
\frac12u^\top H_0u
+
k(h)^\top u
\le
\frac12v^\top H_0v
+
k(h)^\top v,
\]
where
\[
k(h)
=
\dot H_{\vartheta_0}[h]x_0
-
\dot q_{\vartheta_0}[h].
\]
Thus every cluster point minimizes the displayed quadratic program on
\(\mathcal K_0\).  Its objective is strictly convex on
\(\mathcal S\), so the minimizer is unique.  Every subsequence therefore has
the same limit, and
\[
u_\nu\to x'_{\vartheta_0}(h).
\]
The argument holds for every \(t_\nu\downarrow0\) and \(h_\nu\to h\), which is
Hadamard directional differentiability.
\end{proof}

\paragraph{Exact derivative as a numerical check.}
For \(\phi(x)=x_+\) at \(x_0=0\),
\[
\phi'_{0}(h)=h_+.
\]
If \(\widehat x_n=O_p(n^{-1/2})\), the derivative evaluated at the random
base point is either \(h\) or \(0\) according to the sign of
\(\widehat x_n\), and therefore need not consistently estimate
\(h_+\).  An inflated numerical step
\(t_n\downarrow0\), \(\sqrt n\,t_n\to\infty\), removes the random
base-point displacement before taking the directional limit.  The exact
critical-cone derivative remains an important implementation check, but
is not a generally valid replacement for the numerical directional law.

\paragraph{Data-dependent ridge terms.}
Both regularizers used here are functions of the data: the donor ridge is
$T_0\{T_1^{1/4}\widehat\sigma\}^2$ with $\widehat\sigma$ the donor first-difference
standard deviation, and the time ridge is $10^{-6}\tr(M_c\tp M_c)/T_0$. They
therefore move with $\vartheta$, and treating them as fixed would leave a term
out of the derivative.

Proposition~\ref{prop:polyhedral-qp-derivative} already covers this, because it
asks only that $H(\cdot)$ and $q(\cdot)$ be Hadamard directionally
differentiable at $\vartheta_0$, not that they be constant. Writing the donor
program's Gram matrix as
\[
H(\vartheta)=A(\vartheta)\tp A(\vartheta)+\rho(\vartheta)I,
\]
its directional derivative in direction $h$ is
\[
H'_{\vartheta_0}(h)
=
A'_{\vartheta_0}(h)\tp A(\vartheta_0)
+
A(\vartheta_0)\tp A'_{\vartheta_0}(h)
+
\rho'_{\vartheta_0}(h)I,
\]
and the final term is the one that would be lost if the ridge were declared
constant. Both $\rho$ here are smooth functions of sample second moments, so
$\rho'$ exists and is included in the exact benchmark for both programs. The
numerical and exact calculations therefore differentiate the same map, making
the latter a check on the numerical law rather than a comparison with a
different estimator.

\subsection{Global certificate for the quadratic endpoints}\label{app:trust-region-certificate}
The reported quadratic sets are extreme values of a quadratic form over
a ball, which is a nonconvex problem with a known exact certificate.
The endpoints are therefore certified globally, not merely stationary.

\begin{proposition}[Global certificate for a quadratic image endpoint]
\label{prop:trust-region-certificate}
Consider
\[
p^*
=
\min_{\|x\|_2\le r}
\left\{
x^\top Ax+2b^\top x+c
\right\},
\qquad
A=A^\top.
\]
A feasible \(x^*\) is globally optimal if and only if there exists
\(\lambda^*\ge0\) such that
\[
(A+\lambda^*I)x^*=-b,
\]
\[
A+\lambda^*I\succeq0,
\]
\[
\lambda^*(\|x^*\|_2^2-r^2)=0,
\qquad
\|x^*\|_2\le r,
\]
and, when \(A+\lambda^*I\) is singular,
\[
b\in\operatorname{range}(A+\lambda^*I).
\]
The dual objective is
\[
d(\lambda)
=
c-\lambda r^2
-
b^\top(A+\lambda I)^\dagger b
\]
on the domain
\[
\lambda\ge0,\qquad
A+\lambda I\succeq0,\qquad
b\in\operatorname{range}(A+\lambda I).
\]
The quantity
\[
f(x)-d(\lambda)
\]
is a valid primal-dual certificate.  The maximum endpoint is obtained by
applying the same result to \(-f\).
\end{proposition}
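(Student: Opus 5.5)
This is the classical trust-region subproblem, and I will prove it with the S-lemma route written directly via a completing-the-square identity. That way the argument needs no convex-duality machinery beyond elementary linear algebra. Write \(f(x)=x^\top Ax+2b^\top x+c\) and \(A_\lambda=A+\lambda I\).

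\emph{Sufficiency.} Suppose \(x^*\) and \(\lambda^*\) satisfy the stated conditions, and take any feasible \(x\). The plan is to expand the Lagrangian \(f(x)+\lambda^*(\|x\|^2-r^2)\) around \(x^*\). Using \(A_{\lambda^*}x^*=-b\), this gives
\[
f(x)+\lambda^*(\|x\|^2-r^2)=f(x^*)+\lambda^*(\|x^*\|^2-r^2)+(x-x^*)^\top A_{\lambda^*}(x-x^*).
\]
The quadratic term is nonnegative because \(A_{\lambda^*}\succeq0\), and complementary slackness kills the constraint term at \(x^*\). Feasibility gives \(\lambda^*(\|x\|^2-r^2)\le0\), so \(f(x)\ge f(x^*)\). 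The range condition is automatic here, since \(-b=A_{\lambda^*}x^*\).

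\emph{Weak duality.} I next verify \(f(x)\ge d(\lambda)\) for every feasible \(x\) and every \(\lambda\) in the stated domain. For such \(\lambda\), take \(\bar x=-A_\lambda^\dagger b\); the range condition gives \(A_\lambda\bar x=-b\). The same completing-the-square identity, combined with \(b^\top A_\lambda^\dagger b=\bar x^\top A_\lambda\bar x\), yields
\[
f(x)+\lambda(\|x\|^2-r^2)=d(\lambda)+(x-\bar x)^\top A_\lambda(x-\bar x)\ge d(\lambda).
\]
Since \(\lambda\ge0\) and \(\|x\|\le r\), it follows that \(f(x)\ge d(\lambda)\). Hence \(f(x)-d(\lambda)\ge f(x)-p^*\ge0\), which is exactly the certificate statement. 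At the KKT pair the gap is zero: plugging \(x^*\) into the identity with \(\bar x=x^*\) gives \(f(x^*)=d(\lambda^*)\).

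\emph{Necessity.} This is the main obstacle: I must show that every global minimizer admits a multiplier with \(A_{\lambda^*}\succeq0\). I will follow the Moré--Sorensen argument, splitting into two cases.

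In the interior case \(\|x^*\|<r\), local optimality gives \(\nabla f=0\) and \(A\succeq0\). So \(\lambda^*=0\) works.

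In the boundary case \(\|x^*\|=r\), the constraint gradient \(2x^*\neq0\) makes LICQ hold (assuming \(r>0\)), so there is \(\lambda^*\ge0\) with \(A_{\lambda^*}x^*=-b\). The PSD property is where the difficulty lies. For any \(y\) with \(\|y\|=r\), the identity above (with equality constraint on both points) gives
\[
f(y)-f(x^*)=(y-x^*)^\top A_{\lambda^*}(y-x^*),
\]
and the left side is nonnegative by global optimality. The directions \(y-x^*\) with \(\|y\|=r\) cover every \(w\) with \(w^\top x^*\neq0\), up to positive scaling: choose \(y=x^*+tw\) with \(t=-2w^\top x^*/\|w\|^2\). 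These directions are dense, so by continuity \(w^\top A_{\lambda^*}w\ge0\) for all \(w\).

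Finally, the maximum endpoint follows by applying the result to \(-f\), i.e.\ replacing \((A,b,c)\) by \((-A,-b,-c)\).
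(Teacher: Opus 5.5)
Your proof is correct, but it follows a different route from the paper. The paper's proof is a short duality argument. It notes that the Lagrangian \(x^\top(A+\lambda I)x+2b^\top x+c-\lambda r^2\) has a finite infimum in \(x\) exactly on the stated domain, where that infimum is \(d(\lambda)\). It then cites strong duality for the trust-region subproblem (an S-lemma fact) to conclude that the KKT system is necessary and sufficient, with weak duality giving the certificate.

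You instead obtain sufficiency, weak duality and the zero gap at a KKT pair from one completing-the-square identity. You prove necessity directly by the Mor\'e--Sorensen argument: an LICQ multiplier on the sphere, then sphere chords to force \(A+\lambda^*I\succeq0\).

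\emph{What your route buys.} It is self-contained. Because the ball is compact, a minimizer exists. Your necessity step together with the zero-gap computation therefore proves the strong duality with dual attainment that the paper takes as given. Your route also surfaces the hypothesis \(r>0\), which the statement omits but genuinely needs. For \(r=0\) and \(b\neq0\), the point \(x^*=0\) is optimal, yet \((A+\lambda I)0=-b\) has no solution. The paper's appeal to strong duality implicitly relies on Slater's condition for the same reason.

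\emph{What the paper's route buys.} It is brief, and it records that the dual function is \(-\infty\) off the domain. You do not show this, though neither the KKT characterization nor the certificate requires it.

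\emph{Two minor precision points.} First, when \(A+\lambda^*I\) is singular, \(x^*\) need not equal \(-(A+\lambda^*I)^\dagger b\). Both solve \((A+\lambda^*I)y=-b\), so they differ by a kernel element; the quadratic remainder then vanishes and \(f(x^*)=d(\lambda^*)\) still holds. Second, in the chord \(y-x^*=tw\) the scalar \(t\) can be negative, so the covering is up to nonzero rather than positive scaling. This is harmless, since \(w^\top(A+\lambda^*I)w\) is even in \(w\).
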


\begin{proof}
The Lagrangian is
\[
\mathcal L(x,\lambda)
=
x^\top(A+\lambda I)x+2b^\top x+c-\lambda r^2.
\]
It has a finite infimum in \(x\) exactly under the displayed
positive-semidefinite and range conditions, in which case its infimum is
\(d(\lambda)\).  The trust-region subproblem satisfies strong duality.
The KKT conditions are therefore necessary and sufficient for global
optimality, and weak duality makes the primal-dual gap a valid
certificate.
\end{proof}

\subsection{The numerical law at a moving base point}\label{app:moving-base}
The difference quotient is taken at the estimated panel, not at the
population one, so the step has to dominate the sampling scale for the
quotient to reproduce the population directional derivative. The proof
below makes that separation explicit.

\begin{proof}[Proof of
Proposition~\ref{prop:hard-simplex-interface}(iii)]
Assumption~\ref{as:primitive-scale} and
\(\widehat s_n\asymp_p s_n\) imply
\[
\delta_n\to_p0,
\qquad
\frac{s_n}{\delta_n}
=
O_p(s_n^{1-\gamma})
=o_p(1).
\]
Write
\[
z_n
=
\frac{\widehat\vartheta_n-\vartheta_0}{\delta_n}.
\]
Then \(z_n=o_p(1)\).

Let \(h_n\) be any conditionally tight sequence of tangent directions.
The chart expansion gives
\[
\mathcal P_{\delta_n,h_n}(\widehat\vartheta_n)
=
\vartheta_0
+
\delta_n
\{z_n+h_n+\rho_n\},
\qquad
\rho_n=o_p(1)
\]
uniformly on conditionally compact direction sets.  Also,
\[
\widehat\vartheta_n
=
\vartheta_0+\delta_n z_n.
\]
Therefore
\[
\begin{aligned}
&
\frac{
\Phi\{
\mathcal P_{\delta_n,h_n}(\widehat\vartheta_n)
\}
-
\Phi(\widehat\vartheta_n)
}{\delta_n}\\
&\quad=
\frac{
\Phi\{\vartheta_0+\delta_n(z_n+h_n+\rho_n)\}
-
\Phi(\vartheta_0)
}{\delta_n}
-
\frac{
\Phi(\vartheta_0+\delta_n z_n)
-
\Phi(\vartheta_0)
}{\delta_n}.
\end{aligned}
\]
Hadamard directional differentiability gives
\[
\frac{
\Phi\{\vartheta_0+\delta_n(z_n+h_n+\rho_n)\}
-
\Phi(\vartheta_0)
}{\delta_n}
-
\Phi'_{\vartheta_0}(h_n)
=o_p(1)
\]
on every compact tangent set, while
\[
\frac{
\Phi(\vartheta_0+\delta_n z_n)
-
\Phi(\vartheta_0)
}{\delta_n}
\longrightarrow
\Phi'_{\vartheta_0}(0)
=
0.
\]
Hence
\[
\widehat Z_n^*
-
\Phi'_{\vartheta_0}
(\widetilde G_{\vartheta,n}^*)
=
o_p^*(1),
\]
where the computed-map error is absorbed by the stated
\(o_p(\delta_n)\) condition.

Multiplying by \(\widehat s_n/s_n=O_p(1)\) yields
\[
s_n^{-1}U_n^*
-
\frac{\widehat s_n}{s_n}
\Phi'_{\vartheta_0}
\left(
\frac{\widehat B_n\xi}{\widehat s_n}
\right)
=
o_p^*(1).
\]
A Hadamard directional derivative is positively homogeneous, so
\[
\frac{\widehat s_n}{s_n}
\Phi'_{\vartheta_0}
\left(
\frac{\widehat B_n\xi}{\widehat s_n}
\right)
=
\Phi'_{\vartheta_0}
\left(
\frac{\widehat B_n\xi}{s_n}
\right).
\]
The conditional convergence in
Assumption~\ref{as:primitive-scale} and the conditional continuous-mapping
theorem now give
\[
s_n^{-1}U_n^*
\rightsquigarrow_{\mathbb P}
\Phi'_{\vartheta_0}(G_\vartheta).
\]
\end{proof}

\subsection{Optimization-gap certificate}\label{app:gap-certificate}
\begin{lemma}[Optimization-gap control for a numerical quotient]
\label{lem:solver-gap}
For every donor or time program, let
\[
f(x)=\frac12x^\top Hx-q^\top x
\]
be \(\underline\lambda\)-strongly convex on the feasible difference space.
Let \(x^*\) be its exact minimizer, let \(\widetilde x\) be a feasible computed
solution, and let \(\underline d\) be a valid dual lower bound.  With
\[
g=f(\widetilde x)-\underline d,
\]
\[
\|\widetilde x-x^*\|
\le
\sqrt{\frac{2g}{\underline\lambda}}.
\]

Suppose the effect map is locally Lipschitz in all QP solutions with constant
\(L_\Phi\).  Let \(g_{0,n}\) denote the maximum base-solve gap and
\(g_{b,n}\) the maximum perturbed-solve gap in draw \(b\).  The resulting
effect-difference-quotient error is bounded by
\[
\frac{L_\Phi}{\delta_n}
\left[
\sqrt{\frac{2g_{0,n}}{\underline\lambda}}
+
\sqrt{\frac{2g_{b,n}}{\underline\lambda}}
\right].
\]
Hence a sufficient uniform implementation condition is
\[
\max
\left\{
g_{0,n},
\max_{1\le b\le B_n}g_{b,n}
\right\}
=
o_p(\delta_n^2).
\]
\end{lemma}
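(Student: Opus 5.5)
The lemma has two claims: the solution-distance bound, and the propagation of that bound through the difference quotient. The first rests on strong convexity plus weak duality; the second uses the triangle inequality together with the Lipschitz property of the effect map.

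\emph{Distance bound.} Because \(x^*\) minimizes \(f\) over the polyhedron, the first-order condition gives
\[
\nabla f(x^*)^\top(\widetilde x-x^*)\ge0 .
\]
The difference \(\widetilde x-x^*\) lies in the feasible difference space, where \(f\) is \(\underline\lambda\)-strongly convex. Hence the exact second-order expansion of the quadratic yields
\[
f(\widetilde x)-f(x^*)
=\nabla f(x^*)^\top(\widetilde x-x^*)+\tfrac12(\widetilde x-x^*)^\top H(\widetilde x-x^*)
\ge\tfrac{\underline\lambda}{2}\|\widetilde x-x^*\|^2 .
\]
Weak duality gives \(\underline d\le f(x^*)\), so \(f(\widetilde x)-f(x^*)\le g\). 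Rearranging produces \(\|\widetilde x-x^*\|\le\sqrt{2g/\underline\lambda}\). When several programs (one per treated group plus the time program) are solved, the same argument applies to each. The maximal gap then controls the stacked solution error, up to a fixed constant depending on \(G_1\). I would absorb that constant into \(L_\Phi\), or state the gaps per program.

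\emph{Propagation.} Write \(\widetilde\Phi_n\) for the computed map and \(\Phi\) for the exact map. By assumption, both are evaluated through the QP solutions, and the effect map is \(L_\Phi\)-Lipschitz in those solutions on the relevant neighborhood. Then
\[
\bigl|\widetilde\Phi_n(\vartheta)-\Phi(\vartheta)\bigr|\le L_\Phi\|\widetilde x(\vartheta)-x^*(\vartheta)\|
\]
holds at the base point \(\widehat\vartheta_n\), with gap \(g_{0,n}\), and at the perturbed point \(\mathcal P_{\delta_n,\cdot}(\widehat\vartheta_n)\), with gap \(g_{b,n}\). Subtracting the exact difference quotient from the computed one and applying the triangle inequality gives the displayed bound, divided by \(\delta_n\).

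\emph{Uniform condition.} If the maximal gap is \(o_p(\delta_n^2)\), each square root is \(o_p(\delta_n)\). The quotient error is therefore \(o_p(1)\) uniformly over \(b\le B_n\). This is exactly the computed-map condition absorbed in the proof of Proposition~\ref{prop:hard-simplex-interface}(iii).

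The main delicate point is ensuring that the Lipschitz constant and \(\underline\lambda\) hold uniformly over all draws. This requires every perturbed panel to lie in a fixed neighborhood of \(\vartheta_0\) on which the strong-convexity modulus and the Lipschitz constant are valid. I would obtain this from \(\delta_n\to_p0\), Lemma~\ref{lem:logit-chart}, and the uniform curvature hypothesis of Proposition~\ref{prop:polyhedral-qp-derivative}, working on an event whose probability tends to one.
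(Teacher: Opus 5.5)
Your proposal is correct and follows the paper's proof. The paper argues the same way: the variational inequality at $x^*$ together with strong convexity gives $f(\widetilde x)-f(x^*)\ge\frac{\underline\lambda}{2}\|\widetilde x-x^*\|^2$, weak duality bounds the left side by $g$, and applying the Lipschitz bound at the base and perturbed panels and dividing by $\delta_n$ yields the quotient bound. Your remarks on the stacking constant across the $G_1+1$ programs and on the uniformity of $\underline\lambda$ and $L_\Phi$ over draws are sensible refinements; the paper leaves both implicit in its hypotheses.
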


\begin{proof}
Strong convexity and the variational inequality at \(x^*\) give
\[
f(\widetilde x)-f(x^*)
\ge
\frac{\underline\lambda}{2}
\|\widetilde x-x^*\|^2.
\]
Since \(\underline d\le f(x^*)\),
\[
f(\widetilde x)-f(x^*)\le g.
\]
This proves the solution-error bound.  The Lipschitz bound applied at the
base and perturbed panels, followed by division by \(\delta_n\), gives the
quotient bound.
\end{proof}

\subsection{Household split-sample identity}
\label{app:cross-replicate}

We use the household A/B calculation as a diagnostic under a different set of
assumptions.  The next identity records its target when the two re-estimated
halves have different sampling centers.

\begin{proposition}[Cross-replicate correction with possibly different centers]
\label{prop:ab-different-centers}
Let
\[
\widehat\tau^A=\mu^A+e^A,
\qquad
\widehat\tau^B=\mu^B+e^B,
\qquad
\Sigma^{AB}
=
\operatorname{Cov}(e^A,e^B\mid\mathcal F).
\]
For every deterministic symmetric matrix \(H\),
\[
E\left[
G_1^{-1}(\widehat\tau^A)^\top H\widehat\tau^B
-
G_1^{-1}\operatorname{tr}(H\Sigma^{AB})
\mid\mathcal F
\right]
=
G_1^{-1}(\mu^A)^\top H\mu^B.
\]
Therefore the cross-replicate quantity targets \(Q_H(\mu)\) only if
\(\mu^A=\mu^B=\mu\), or asymptotically when
\[
a_nH(\mu^A-\mu)=o(1),
\qquad
a_nH(\mu^B-\mu)=o(1).
\]
\end{proposition}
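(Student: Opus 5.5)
The plan is to expand the bilinear form directly and take conditional expectations term by term. Substituting the two decompositions gives
\[
(\widehat\tau^A)^\top H\widehat\tau^B
=
(\mu^A)^\top H\mu^B
+
(\mu^A)^\top He^B
+
(e^A)^\top H\mu^B
+
(e^A)^\top He^B .
\]
I will treat \(\mu^A\), \(\mu^B\), and \(H\) as \(\mathcal F\)-measurable, which is what conditioning on the design means here.

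The key steps are as follows.
\begin{enumerate}
\item The two cross terms have conditional mean zero. This needs \(E(e^A\mid\mathcal F)=E(e^B\mid\mathcal F)=0\). I take this as implicit in the definition of \(\mu^A,\mu^B\) as the conditional means of the two half-sample first stages, following the discussion in Section~\ref{sec:reported-estimators}.
\item For the last term, write \((e^A)^\top He^B=\operatorname{tr}\{H e^B(e^A)^\top\}\). By linearity of trace and of conditional expectation,
\[
E\{(e^A)^\top He^B\mid\mathcal F\}
=
\operatorname{tr}\bigl\{H\,E(e^B(e^A)^\top\mid\mathcal F)\bigr\}
=
\operatorname{tr}\{H(\Sigma^{AB})^\top\}.
\]
Because \(H\) is symmetric, \(\operatorname{tr}(H\Sigma^{BA})=\operatorname{tr}\{(H\Sigma^{BA})^\top\}=\operatorname{tr}(\Sigma^{AB}H)=\operatorname{tr}(H\Sigma^{AB})\). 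The correction therefore cancels exactly.
\item Collecting terms and multiplying by \(G_1^{-1}\) yields the displayed identity. This only requires integrability, for example finite conditional second moments of \(e^A\) and \(e^B\), which I will state as maintained.
\end{enumerate}

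For the ``therefore'' clause, the identity shows that the quantity targets \(G_1^{-1}(\mu^A)^\top H\mu^B\). This coincides with \(Q_H(\mu)\) whenever \(\mu^A=\mu^B=\mu\).

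For the asymptotic version, write
\[
(\mu^A)^\top H\mu^B-\mu^\top H\mu
=
(\mu^A-\mu)^\top H\mu^B+\mu^\top H(\mu^B-\mu).
\]
Using the symmetry of \(H\), each piece equals an inner product of \(H(\mu^A-\mu)\) or \(H(\mu^B-\mu)\) with a bounded vector, assuming the centers converge. Multiplying by \(a_n\) and applying the stated \(o(1)\) conditions makes the discrepancy \(o(a_n^{-1})\), which is negligible at the regular scale of Theorem~\ref{thm:fixedset-transfer}(ii).

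The only point needing care, and the one I expect to be the main obstacle, is the symmetry step that equates \(\operatorname{tr}(H\Sigma^{BA})\) with \(\operatorname{tr}(H\Sigma^{AB})\). The cross-covariance \(\Sigma^{AB}\) is not itself symmetric, so the equality genuinely uses the symmetry of \(H\). I would also make explicit that "only if" is meant in the sense that \(Q_H(\mu)\) is not the target in general: one cannot claim a literal converse for arbitrary \(H\), since distinct centers can give equal bilinear values.
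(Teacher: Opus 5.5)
Your proposal is correct and follows the paper's own proof. Like the paper, you expand the cross product, remove the two cross terms using $E(e^A\mid\mathcal F)=E(e^B\mid\mathcal F)=0$, and evaluate $E\{(e^A)^\top He^B\mid\mathcal F\}$ as $\operatorname{tr}(H\Sigma^{AB})$. Your explicit use of the symmetry of $H$ in that trace step and your telescoping argument for the asymptotic clause fill in details the paper leaves implicit.
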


\begin{proof}
Expand the cross product and use
\(E(e^A\mid\mathcal F)=E(e^B\mid\mathcal F)=0\) and
\[
E\{(e^A)^\top He^B\mid\mathcal F\}
=
\operatorname{tr}(H\Sigma^{AB}).
\]
\end{proof}

\section{Monte Carlo and Numerical Validation}
\label{app:mc}

\subsection{Audited designs for the primary procedure}
\label{app:mc-designs}

We examine five fixed-dimensional hard-simplex designs.  The population KKT
diagnostics in the following table determine the labels used for these
designs.

\begin{table}[t]
\centering\small
\caption{Audited geometries in the procedure-matched Monte Carlo.}
\label{tab:mc-audited-designs}
\begin{tabularx}{\textwidth}{LXX}
\toprule
Design & Population geometry & Role in the evaluation\\
\midrule
Stable face
& Positive primal and dual margins bounded away from zero
& Baseline regular configuration\\
Near face
& Small positive KKT margin; neighboring faces are reached by perturbations
& Finite-sample behavior near a simplex face\\
Zero margin
& One audited KKT margin is zero at the population fit
& Directional behavior at a face boundary\\
Point-mass slack stress
& Degenerate slack distribution with near-duplicate donor directions
& Deliberately conservative stress case\\
Application shaped
& Low-rank replicate tangent and donor geometry chosen to resemble the ACS
application
& External check on the reported procedure\\
\bottomrule
\end{tabularx}
\tablenote{The labels `near face' and `zero margin' are determined by the
population KKT certificate.  A Dirichlet-generated weight is labeled zero only
when that certificate gives a zero margin.  All rows use the same re-solve,
max-statistic, centered-region, and projection algorithm.}
\end{table}

The main table reports 2,000 outer replications and 999 directional draws per
row.  The remaining subsections examine the finite-sample conservatism of the
reported law and the deliberately degenerate point-mass stress design.

\paragraph{Exact derivative and numerical quotient.}
At the prespecified step, the numerical quotient and the exact critical-cone
derivative agree at the reported tolerance for each target.  In the
application-shaped design, all logit-chart perturbations remain in the natural
domain and every QP solve succeeds.  We use the exact derivative only to check
the numerical law, not to define a second inferential branch.

\subsection{Sources of finite-sample conservatism}\label{app:sd-decomposition}
In every design, the average bootstrap standard deviation exceeds the Monte
Carlo standard deviation.  To locate the difference, we compare three ratios:
the first-order law to the realized dispersion, the plug-in derivative to the
oracle derivative, and the numerical quotient to the plug-in derivative.

\begin{center}\footnotesize
\begin{tabular}{llrrr}
\toprule
Design & Target & First-order & Plug-in & Step\\
\midrule
Stable face & mean & $1.261$ & $0.974$ & $0.964$\\
Stable face & slope & $1.187$ & $0.983$ & $0.974$\\
Near face & mean & $1.390$ & $0.738$ & $1.002$\\
Near face & slope & $1.271$ & $0.805$ & $1.043$\\
Application shaped & mean & $1.192$ & $1.271$ & $0.762$\\
Application shaped & slope & $1.086$ & $0.957$ & $0.997$\\
\bottomrule
\end{tabular}

\end{center}

The step is not the source. Across designs and targets it contributes
\SDStepRange, so the difference quotient neither inflates nor deflates the
answer materially. In the regular designs the excess is the first-order law
itself, at $1.19$ to $1.39$, with the plug-in at or below one.

In the application-shaped design, the first-order factor is $1.192$ and the
plug-in factor is \SDPluginAppShaped; the latter is at or below one in the
other designs.  The step contribution is \SDStepAppShaped.  The observed
conservatism is therefore associated with evaluating the derivative at the
estimated base point rather than with inflation from the numerical step.

\subsection{Degenerate stress design}\label{app:mc-stress}
The point-mass slack design places a donor weight at an interior point mass and
is included as a stress case rather than as a calibration target.

\begin{center}\footnotesize
\begin{tabular}{lrrrr}
\toprule
Design & SD ratio & Simultaneous FWER & Region coverage & Failures\\
\midrule
Point-mass slack & $1.611$ & $0.997$ & $1.000$ & $0$\\
\bottomrule
\end{tabular}

\end{center}

Familywise coverage is \MCFwerPointMass\ and the SD ratio is
\MCSDRatioPointMass.  The interval is substantially conservative in this
deliberately degenerate configuration.

\section{Medicaid Survey Implementation and Sensitivity}
\label{app:medicaid}

We report the frozen T18-by-D11 policy design, the official ACS replicate
construction, and the prespecified sensitivity analyses.  The numerical
directional procedure supplies all primary uncertainty statements.  Exact
critical-cone, official SDR, diagonal, variance-trace, and household A/B
calculations are benchmarks or diagnostics.

\subsection{ACS data and frozen policy design}
\label{app:med-data-design}

The state--year outcome and target population are defined in Main
Section~\ref{sec:med-design}.  Table~\ref{tab:med-policy-design-supp}
records the cohort definitions used throughout the analysis.  These definitions
were fixed before any outcome estimate or fit statistic was computed.  In
particular, Colorado is an early adult-expansion exclusion and Wisconsin is not
a primary donor.

The primary panel has \(G_1=18\), \(G_0=11\), \(T_0=T_1=6\).  All
full-sample and replicate estimates use the same policy lists, pre/post windows,
ridge rules, and target definitions.  The historical T25-by-D17 panel is used
only as a benchmark and is never mixed with primary tables or figures.

\begin{table}[t]
\centering\small
\caption{Prespecified Medicaid policy design.}
\label{tab:med-policy-design-supp}
\begin{tabularx}{\textwidth}{LrX}
\toprule
Policy class & Count & Jurisdictions\\
\midrule
Primary January-2014 treated cohort
& 18
& AZ, AR, DE, HI, IL, IA, KY, MD, MA, NV, NM, NY, ND, OH, OR, RI, VT, WV\\
Early adult-expansion exclusions
& 7
& CA, CO, CT, DC, MN, NJ, WA\\
Primary core donor set
& 11
& AL, FL, GA, KS, MS, NC, SC, SD, TN, TX, WY\\
Broader-donor additions used only in sensitivity
& 6
& WI; ID, NE, UT, OK, MO\\
\bottomrule
\end{tabularx}
\tablenote{Colorado is classified as an early adult-expansion jurisdiction.
Wisconsin is excluded from the core donor set because of its childless-adult
waiver.  The five remaining additions were untreated during the 2014--2019
outcome window and adopted after that window.  The T18-by-D11 lists are fixed
before any outcome estimate or fit statistic is computed and are used in every
full-sample, replicate, and sensitivity calculation.}
\end{table}

\subsection{Household A/B cross-replicate diagnostic}
\label{app:med-ab}

Households are assigned once, within state and year, to halves A and B; every
person in a household remains in the same half.  The assignment is held fixed
for the full sample and all 80 official replicates.  In each half and replicate
we reconstruct the state--year panel and re-estimate the complete donor and time
programs.

Let \(\widehat\tau^A,\widehat\tau^B\) be the full-sample half estimates and
\(\widehat\tau^{A,(r)},\widehat\tau^{B,(r)}\) their replicate estimates.  The
paired cross covariance is
\[
\widehat\Sigma^{AB}
=
\frac4{80}
\sum_{r=1}^{80}
\{\widehat\tau^{A,(r)}-\widehat\tau^A\}
\{\widehat\tau^{B,(r)}-\widehat\tau^B\}^\top.
\]
For a symmetric \(H\), the diagnostic is
\[
\widehat Q_H^{AB}
=
G_1^{-1}(\widehat\tau^A)^\top H\widehat\tau^B
-
G_1^{-1}\operatorname{tr}(H\widehat\Sigma^{AB}).
\]
By Proposition~\ref{prop:ab-different-centers}, this equals a same-center
quadratic target only under the stated center condition.  We therefore use it
as an assumption-distinct check and not as a second primary estimator.

\subsection{Pre-fit support and policy-defined retention}
\label{app:med-fit-support}

The primary treated-state RMSPEs range from 0.55 to 3.40 percentage points,
with median 1.28.  The pseudo-treated donor reference has median 1.29, 95th
percentile 3.16, and maximum 3.61 percentage points.  All 18 treated
jurisdictions lie inside the full pseudo-treated range and 14 lie inside its
5--95 percent range.  Maryland has the largest treated RMSPE; Arizona, West
Virginia, and Oregon are next.  These states remain in the policy-defined
primary cohort.

\begin{table}[t]
\centering\small
\caption{Largest intercept-adjusted pre-fit RMSPEs in the clean design.}
\label{tab:med-fit-largest}
\begin{tabular}{lrrrrr}
\toprule
Jurisdiction & MD & AZ & WV & OR & ND\\
\midrule
RMSPE (pp) & 3.40 & 2.34 & 2.28 & 2.18 & 1.83\\
\bottomrule
\end{tabular}
\tablenote{The statistic is based on the demeaned donor-weight program and
includes the treated-specific intercept adjustment.}
\end{table}

\subsection{Leave-one-state sensitivity for the largest fit residuals}
\label{app:med-leave-one}

\begingroup\footnotesize
\begin{table}[t]
\centering\small
\caption{Primary point estimates after excluding the largest fit residuals.}
\label{tab:leave-one-state}
\begin{tabular}{lrrrrrr}
\toprule
Spec.
& \(G_1\)
& Mean (pp)
& Slope (pp/SD)
& \(V\) (pp\(^2\))
& \(R^2\)
& RMSPE (pp)\\
\midrule
Primary, all 18 & 18 & \MedMeanATT & \MedSlope & \MedVSDRPoint & \MedRtwo & \MedCleanRmspeMed\\
Drop Maryland & 17 & -6.58 & -6.12 & 46.3 & 0.808 & 1.19\\
Drop Arizona & 17 & -6.34 & -6.16 & 47.4 & 0.800 & 1.19\\
Drop W.\ Virginia & 17 & -6.00 & -5.97 & 45.2 & 0.788 & 1.19\\
Drop all three & 15 & -6.23 & -6.44 & 50.3 & 0.824 & 1.15\\
\bottomrule
\end{tabular}
\tablenote{These are point-estimate sensitivities and do not redefine the
primary treated set.  The primary cohort remains the policy-defined T18 set.
The $V$ and $R^2$ columns are the variance-trace-corrected values,
\MedVSDRPoint\ and \MedRtwo; the household split-sample counterparts,
\MedVAB\ and \MedRtwoAB, are reported separately and are not
interchangeable with them.}
\end{table}
\endgroup

\subsection{Design and target sensitivity}
\label{app:med-design-sensitivity}

We re-estimate the broader-donor and historical-cohort specifications in Main
Table~\ref{tab:med-design-comparison} with the same numerical law.  The
T18 comparison changes only the donor pool, from D17 to D11.  The historical
T25-by-D17 specification changes both the treated cohort and the donor pool.
Tables~\ref{tab:med-donor-exposure-supp} and
\ref{tab:med-quadratic-signature-supp} give the linear target exposures and
the two quadratic donor scales used in the main-text discussion.

\begin{table}[t]
\centering\small
\caption{Prespecified linear targets and the estimated donor-sensitive subspace.}
\label{tab:med-donor-exposure-supp}
\begin{tabular}{lrrrc}
\toprule
Target & Donor exposure & $\|D\tp\ell\|$ & $\dist(\ell,\mathcal E_D)/\|\ell\|$ & Centered\\
\midrule
Equal-jurisdiction mean & $0.425$ & $0.320$ & $1.000$ & No\\
Population-weighted mean & $0.392$ & $0.320$ & $0.911$ & No\\
Centered slope & $0.042$ & $0.020$ & $0.546$ & Yes\\
75th-25th projected contrast & $0.052$ & $0.025$ & $0.546$ & Yes\\
\bottomrule
\end{tabular}

\tablenote{Donor exposure is in percentage points.  The loading norm
\(\|D^\top\ell\|\) treats donor directions equally, whereas exposure weights
them by the estimated donor-shock covariance.  Relative distance is
\(\|P_{\mathcal S_D}\ell\|/\|\ell\|\); neither quantity is a variance share.
The table is descriptive and does not select or redefine a target.}
\end{table}

\begin{table}[t]
\centering\small
\caption{Donor scales for the Medicaid quadratic targets.}
\label{tab:med-quadratic-signature-supp}
\begin{tabular}{lrrr}
\toprule
Quadratic target & $\mathfrak a_H$ (pp$^2$) & $\mathfrak k_H$ (pp$^2$) & $\|H\Sigma_DH\|_F$ (pp$^2$)\\
\midrule
Total heterogeneity $V$ & $0.597$ & $0.047$ & $0.594$\\
Explained heterogeneity $V_{\mathrm{expl}}$ & $0.502$ & $0.003$ & $0.032$\\
\bottomrule
\end{tabular}

\tablenote{All scales are in pp\(^2\).  The first two columns give the standard
deviation scales of the sampling-center interaction and centered quadratic
donor channels.  The last column is the Frobenius norm of the matrix
annihilation residual.  It is zero under the donor-robust condition for a
positive-semidefinite quadratic target.}
\end{table}

\subsection{Region-metric sensitivity}
\label{app:med-region-metric}

Main Section~\ref{sec:med-quadratic} defines the primary
diagonal-standardization metric.  Its endpoints change negligibly over the
reported ridge range.  By contrast, a full-covariance metric requires a
separate regularization choice and changes the upper endpoint of the
total-heterogeneity set.

\begin{center}\small

\end{center}

The lower endpoint of total heterogeneity is positive under every reported
metric.  The diagonal metric remains primary because it was prespecified and
does not require an additional full-covariance regularization choice.

Dropping one replicate changes no target exposure by more than \RepLOOMax\
percent.  Repeated half-frame calculations change no exposure by more than
\RepHalfMax\ percent and preserve the ordering
\[
\mathfrak g_D(\ell_{\mathrm{mean}})
>
\mathfrak g_D(\ell_{\mathrm{pop}})
>
\mathfrak g_D(\ell_{75-25})
>
\mathfrak g_D(\ell_{\mathrm{slope}}).
\]

\subsection{Official replicate calculations and primary numerical law}

Let \(\widehat\tau\) be the full-sample effect vector and
\(\widehat\tau^{(r)}\), \(r=1,\ldots,80\), the effect vectors obtained after
recomputing the state--year rates and re-solving the entire first stage with
official replicate \(r\).  The regular benchmark covariance is
\[
\widehat\Sigma_{\mathrm{SDR}}^\tau
=
\frac4{80}
\sum_{r=1}^{80}
(\widehat\tau^{(r)}-\widehat\tau)
(\widehat\tau^{(r)}-\widehat\tau)^\top.
\]
It is not the primary law when active-face regularity is uncertain.  The
primary primitive factor uses the corresponding full-estimate-centered
state--year replicate differences and generates the numerical directional
effect draws described in Supplementary Appendix~A.

The holdout ratios are used only to anchor a deterministic sensitivity scale.
No probability statement transporting the validation ratios to the
post-treatment mismatch is made in this paper.

\subsection{Optimization and replicate diagnostics}
\label{app:optimization-diagnostics}

\begin{table}[htbp]
\centering\small
\caption{Hard-simplex and ACS replicate diagnostics for the primary design.}
\label{tab:active-face-diagnostics}
\begin{tabular}{lcc}
\toprule
Diagnostic & Donor programs & Time programs\\
\midrule
Number of fitted programs & \ActNDonor & \ActNTime\\
Programs with an active-set change across \ActReplicates\ replicates
 & \ActDonorChanges & \ActTimeChanges\\
Minimum positive weight & \ActMinPositiveDonor & \ActMinPositiveTime\\
Minimum inactive KKT multiplier & \ActMinInactiveDonor & \ActMinInactiveTime\\
Maximum KKT residual & \ActMaxKKTDonor & \ActMaxKKTTime\\
Maximum reduced-KKT condition number & \ActCondDonor & \ActCondTime\\
Optimization failures & \ActFailDonor & \ActFailTime\\
\midrule
\multicolumn{3}{l}{Covariance diagnostics}\\
Rank / effective rank, centered SDR covariance
 & \multicolumn{2}{c}{\SDRRank\ / \SDREffectiveRank}\\
Smallest eigenvalue before PSD regularization
 & \multicolumn{2}{c}{\SDRMinEig}\\
Target-level change from PSD regularization
 & \multicolumn{2}{c}{\SDRPSDTargetChange}\\
\bottomrule
\end{tabular}
\tablenote{The table records numerical diagnostics under the official ACS
replicate perturbations.  Active-set changes do not select another inferential
branch: every primary interval, region, and test uses the numerical directional
law.  The exact critical-cone calculation checks the same fitted map.}
\end{table}

\subsection{Cross-year replicate coupling}\label{app:coupling}
The within-year SDR identity determines the role of replicate $r$ within an
annual file.  It does not determine whether replicate $r$ in 2009 and
replicate $r$ in 2014 should receive the same Gaussian coefficient.  In the
primary construction, we assign one coefficient to each replicate index and
use it across years.  This common-index coupling is the design assumption in
Assumption~\ref{as:acs-tangent-coupling}, not a consequence of the annual SDR
identity.

As a sensitivity, year-block coupling assigns independent coefficients across
years and removes cross-year dependence from the primitive direction.  Using
the same seed and Gaussian variates, we recompute every reported quantity under
this alternative.  The simultaneous interval widths are
\CouplingWidthMin--\CouplingWidthMax\ times their primary values.  The
critical radius is \CouplingRadiusBlock\ rather than \CouplingRadiusCommon;
the corresponding $V$ set is \CouplingVSetBlock\ rather than
\MedVDirectionalSet, and the $R^2$ set is \CouplingRtwoSetBlock\ rather than
\MedRtwoSet.  The boundary value remains \CouplingBoundaryBlock.

The signs and target ordering are unchanged, but interval widths and
quadratic endpoints are not identical.  We use common-index coupling as the
prespecified primary construction and report year-block coupling as a
sensitivity.  The annual SDR identity itself does not select either cross-year
joint law.

\subsection{Sensitivity to the ACS replicate frame}\label{app:replicate-sensitivity}
The donor exposures use the covariance constructed from the
\RepFrameSize\ released replicate columns.  Leave-one-replicate calculations
change no exposure by more than \RepLOOMax\ per cent.  Across the reported
half-frame calculations, the maximum change is \RepHalfMax\ per cent, and the
ordering of the four target exposures is unchanged.  These checks concern the
finite target-relevant functionals in
Assumption~\ref{as:replicate-tangent-supp}; they do not estimate an unrestricted
survey covariance in the full primitive dimension.

\section{Fixed-Donor Concentration Diagnostic}
\label{app:fixed-donor}

Theorem~\ref{thm:donor-regimes} distinguishes a null donor loading
from a fixed positive loading.  A rank-four Gaussian experiment illustrates
the two concentration patterns.

The simulated target error is
\begin{equation}
\widehat\theta_{G_1}-\theta_{G_1}
=
G_1^{-1/2}Z_u-\ell^\top V_D,
\qquad
Z_u\sim N(0,1),
\qquad
V_D\sim N(0,\Omega_D),
\label{seq:fd-diagnostic}
\end{equation}
with independent own and donor components.  Hence
\begin{equation}
\operatorname{Var}
(\widehat\theta_{G_1}-\theta_{G_1})
=
G_1^{-1}+\ell^\top\Omega_D\ell.
\label{seq:fd-diagnostic-var}
\end{equation}
For a donor-null direction,
\(\ell^\top\Omega_D\ell=0\), so the dispersion contracts at the own-noise
rate.  For a fixed positive-loading direction, the dispersion converges to
\(\{\ell^\top\Omega_D\ell\}^{1/2}\).  The reported interval uses the full
variance in~\eqref{seq:fd-diagnostic-var}; nominal coverage is therefore
compatible with a nonvanishing uncertainty floor.
Table~\ref{tab:fd-check} displays the resulting contraction or positive
limiting floor for the four fixed directions.

\begin{table}[t]
\centering\footnotesize
\caption{Concentration by target loading in the rank-four fixed-donor
experiment.}
\label{tab:fd-check}
\begin{tabular}{lrrrrr}
\toprule
Direction & Floor & SD, $G_1=50$ & SD, $G_1=100$ & SD, $G_1=200$ & 95\% coverage\\
\midrule
Donor-null eigenvector & $0.000$ & $0.141$ & $0.101$ & $0.069$ & $[0.947,\,0.959]$\\
Smallest-positive eigenvector & $0.283$ & $0.318$ & $0.299$ & $0.295$ & $[0.949,\,0.951]$\\
Fitted centered-slope direction & $0.372$ & $0.394$ & $0.383$ & $0.382$ & $[0.946,\,0.952]$\\
Leading eigenvector & $1.000$ & $1.015$ & $1.012$ & $0.992$ & $[0.943,\,0.952]$\\
\bottomrule
\end{tabular}

\tablenote{The donor-shock dimension is fixed and each cell uses 3,000 Monte
Carlo replications.  The donor floor is
\(\{\ell^\top\Omega_D\ell\}^{1/2}\).  The own-noise standard deviations are
\(1/\sqrt{50}=0.141\), \(1/\sqrt{100}=0.100\), and
\(1/\sqrt{200}=0.071\).  The null-space direction has zero donor floor and
contracts with \(G_1\).  The other directions have fixed positive loadings and
approach their corresponding floors.  The final column gives the range of
coverage of the full donor-plus-own Gaussian interval over
\(G_1\in\{50,100,200\}\).}
\end{table}

Coverage of the full-variance interval is nominal in each direction.  Its
width contracts only for the null-space direction; a fixed positive loading
leaves a nonvanishing donor floor.

The root-rate regime in Main
Theorem~\ref{thm:donor-regimes}(iii) is not simulated in this
compact table: the positive-loading directions here are fixed as \(G_1\)
changes.  The table is intended to illustrate the contrast between
null-space concentration and a nonvanishing fixed-donor variance floor.  Theorem~\ref{thm:donor-regimes}(vi) gives the corresponding
quadratic condition; no quadratic Monte Carlo claim is made in this appendix.

\bigskip
\noindent\textbf{Use of generative AI.}\ During manuscript preparation, the
authors used OpenAI ChatGPT (GPT-5.5) to check mathematical derivations and
proofs and to proofread the English text, and Anthropic Claude (Opus 4.8) to
assist with drafting and debugging the analysis and simulation code. Records of
the tool use and the verification steps are retained by the authors. The authors
take full responsibility for the mathematical statements, the proofs, the code
and the reported numbers.

\begin{spacing}{1}

\end{spacing}
\end{document}